\newcommand{\rrangle}{\rangle\!\rangle}
\newcommand{\llangle}{\langle\!\langle}
\newcommand{\strat}[1]{\llangle#1\rrangle}                         
\newcommand{\word}[1]{\langle #1\rangle}                           
\newcommand{\set}[1]{\{#1\}}
\newcommand{\LT}[1]{\stackrel{#1}{\rightarrow}}
\newcommand{\Nat}{\mathbb{N}}
\newcommand{\A}{\textit{Act}}
\newcommand\Prop{\textit{Prop}}
\newcommand\Sim{\sqsubseteq}                                       
\newcommand\dist{\mathcal{D}}
\newcommand\step{\delta}                                           
\renewcommand\L{\mathcal{L}}
\newcommand\G{\mathcal{G}}                                         
\newcommand\lift[1]{\overline{#1}}                               
\newcommand\R{\mathcal{R}}                                       
\newcommand\Supp[1]{\lceil#1\rceil}                              
\newcommand\pd[1]{\overline{#1}}                                 
\newcommand\pone{\texttt{I}}
\newcommand\ptwo{\texttt{I\!I}}
\newcommand{\porder}{\preceq}
\newcommand{\power}{\mathcal{P}}
\newcommand{\finer}{\lhd}
\newcommand{\Id}{\texttt{Id}}
\newcommand{\upclose}[1]{\lfloor #1 \rfloor_{\porder}}
\newcommand{\bigO}{\mathcal{O}}
\newcommand\commentout[1]{}
\newcommand{\CanFollow}{\textsf{CanFollow}}
\newcommand{\CanSim}{\textsf{CanSim}}
\newcommand{\forEach}{\textbf{for\ each}}
\newcommand{\while}{\textbf{while}}
\newcommand{\doo}{\textbf{do}}
\begin{document}

\title{An Algorithm for Probabilistic~Alternating~Simulation}

\author{
  Chenyi Zhang\inst{1} \and
  Jun Pang\inst{2}
}
\institute{
School of Information Technology and Electrical Engineering,
University of Queensland, Australia
\and
Faculty of  Science, Technology and Communication,
University of Luxembourg, Luxembourg
}

\maketitle
\begin{abstract}
In probabilistic game structures, probabilistic alternating simulation (PA-simulation)
relations preserve formulas defined in probabilistic alternating-time temporal logic
with respect to the behaviour of a subset of players.
We propose a partition based algorithm for computing the largest PA-simulation.
It is to our knowledge the first such algorithm that works in polynomial time.
Our solution extends the generalised coarsest partition problem (GCPP)
to a game-based setting with mixed strategies.
The algorithm has higher complexities than those in the literature for non-probabilistic simulation
and probabilistic simulation without mixed actions, 
but slightly improves the existing result
for computing probabilistic simulation with respect to mixed actions.
\end{abstract}

\section{Introduction}
\label{sec:intro}
Simulation and bisimulation relations are useful tools in the verification
of finite and infinite state systems.
State space minimisation modulo these relations
is a valuable technique to fight the state explosion problem in model checking,
since bisimulation preserves properties formulated in logics like CTL and CTL$^*$~\cite{CE81}
while simulation preserves the universal (or safe) fragment of these logics~\cite{GL94}.

In some situations, however, 
it is necessary to model quantitative aspects of a system.
It is the case, for instance, in wireless networks, where
we often need to assume that there is a chance of connection failure with a
given rate. This requires modelling network systems with randomised behaviours
(e.g., by pooling a connection after uncertain amount of time to minimise conflict). Another important fact of real-world systems
is that environment changes, such as unexpected power-off, are often unpredictable.
Therefore, we need to encode appropriate system behaviours to handle such situations, and in order to do so,
it is sometimes crucial to employ probabilistic strategies to achieve 
the best possible outcomes~\cite{NM47}. One simple example is the rock-scissor-paper game
where there is no deterministic strategy to win since the other player's move is unknown, but there is 
a probabilistic strategy, sometimes called \emph{mixed strategy}, to win at least a third 
of all cases in a row, regardless of what the other player does.\footnote{A
mixed strategy also ensures an eventual win but deterministic strategies do not.}

A probabilistic game structure (PGS) is a model that has probabilistic transitions, 
and allows the consideration of probabilistic choices of players. The simulation relation
in PGSs, called probabilistic alternating simulation (PA-simulation), 
has been shown to preserve a fragment of probabilistic alternating-time temporal logic (PATL)
under \emph{mixed strategies}, which is used in characterising what a group of players 
can enforce in such systems~\cite{ZP10}.
In this paper we propose a polynomial-time algorithm for computing the largest 
PA-simulation, which is, to the best of our knowledge, the first algorithm for computing a simulation
relation in probabilistic concurrent games.
A PGS combines the modelling of probabilistic transitions
from probabilistic automata (PA), and the user interactions from concurrent game structures (GS).
In PA, the probabilistic notions of simulation preserve PCTL safety formulas~\cite{SL95}. 
The \emph{alternating simulation}~\cite{AHKV98} in GS has been been proved to preserve a fragment
of ATL$^*$, under the semantics of \emph{deterministic strategies}. These
simulation relations are computable in polynomial time for finite systems~\cite{Zhang08,AHKV98}.

\paragraph{Related work.}
Efficient algorithms have been proposed
for computing the largest simulation (e.g., see~\cite{HHK95,TC01,BG03,GPP03,GP08})
in finite systems,
with a variety of time and space complexities.
In particular, Gentilini et al.~\cite{GPP03} develop an efficient algorithm with an improved time complexity based on
the work of Henzinger~et~al.~\cite{HHK95} without 
losing the optimal space complexity.
Van Glabbeek and Ploeger~\cite{GP08} later find a flaw in~\cite{GPP03} and
propose a non-trivial fix.
The best algorithm for simulation in terms of time complexity is~\cite{RT07}.
To compute probabilistic simulation, Baier et al.~\cite{BEM00}
reduce the problem of establishing a weight function for the lifted relation to a maximal flow problem~\cite{AMO93}.
Cattani and Segala~\cite{CS02} reduce the problem of deciding strong probabilistic bisimulation
to LP~\cite{Sch86} problems.
Zhang et al.~\cite{ZHEJ08} develop algorithms with improved time complexity
for probabilistic simulations, following~\cite{BEM00,CS02}.
Crafa and Ranzato~\cite{CR11} improve the time complexity of the algorithms of Zhang et al.~\cite{ZHEJ08}
by applying abstract interpretation.
A space efficient probabilistic simulation algorithm is proposed by Zhang~\cite{Zhang08}
using the techniques proposed in~\cite{GPP03,GP08}.

Studies on stochastic games have actually been carried out since as early as the $1950$s~\cite{Shapley53}, 
and a rich literature has developed in recent years (e.g.~see~\cite{AHK98,Alf03,AM04,CDH06}).
One existing approach called game metrics~\cite{dAMRS08} defines approximation-based
simulation relations, with a kernel simulation characterising the logic
quantitative $\mu$-calculus ($q\mu$)~\cite{Alf03}, an extension of modal $\mu$-calculus~\cite{Kozen83}
where each state is assigned a quantitative value in $[0,1]$ for every formula.
However, so far the best solutions in the literature on approximating the simulation as defined in
the metrics for concurrent games potentially take exponential time~\cite{CdAMR10}. 
Although PA-simulation is strictly stronger than the kernel simulation relation of the game metrics in~\cite{dAMRS08},
the algorithm presented in the paper has a more tractable complexity result,
and we believe that it will benefit the abstraction or refinement based
techniques for verifying game-based~properties. 

\paragraph{Structure of the paper.}
Sect.~\ref{sec:pgs} defines basic notions that are used in the technical part.
In Sect.~\ref{sec:gcpp} we propose a solution of calculating largest PA-simulation in finite PGSs, based on
GCPP. The algorithms on PA-simulation is presented in Sect.~\ref{sec:pa-sim-alg}.
We conclude the paper in Sect.~\ref{sec:conclusion}.

\section{Preliminaries}
\label{sec:pgs}
Probabilistic game structures are defined in terms of discrete probabilistic distributions.
A \emph{discrete probabilistic distribution} $\Delta$ over a finite set $S$
is a function of type $S\rightarrow[0,1]$, where $\sum_{s\in S}\Delta(s)=1$.
We write $\dist(S)$ for the set of all such distributions on a fixed $S$.
For a set $T\subseteq S$, define $\Delta(T)=\sum_{s\in T}\Delta(s)$.
Given a finite index set $I$, a list of distributions $(\Delta_i)_{i\in I}$ 
and a list of probabilities $(p_i)_{i\in I}$
where, for all $i\in I$, $p_i\in[0,1]$ and $\sum_{i\in I}p_i=1$,
$\sum_{i\in I}p_i\Delta_i$ is obviously also a distribution.
For $s\in S$, $\pd{s}$ is called a \emph{point (or Dirac) distribution} satisfying $\pd{s}(s)=1$ and
$\pd{s}(t)=0$ for all $t\neq s$. Given $\Delta\in\dist(S)$, we define $\Supp{\Delta}$
as the set $\set{s\in S\mid\Delta(s)>0}$, which is the \emph{support}~of~$\Delta$.

In this paper we assume a set of two players $\set{\pone, \ptwo}$ (though our results can be extended
to handle a finite set of players as in the standard game structure and ATL semantics~\cite{AHK02}),
and $\Prop$ a finite set of propositions.

\begin{definition}\label{def:PGS}
A probabilistic game structure 
$\G$ is a tuple $\word{S, s_0, \L, \A,\step}$, where
\begin{itemize}
\item $S$ is a finite set of states, with $s_0$ the initial state;
\item $\L:S\rightarrow 2^{\Prop}$ is the labelling function which
      assigns to each state $s\in S$ a set of propositions that are true in $s$;
\item $\A=\A_\pone\times\A_\ptwo$ is a finite set of joint actions, where $\A_\pone$ and $\A_\ptwo$ are, respectively, the sets of
      actions for players $\pone$ and $\ptwo$;
\item $\step:S\times\A\rightarrow \dist(S)$ is a transition function.
\end{itemize}
\end{definition}

If in state $s$ player $\pone$ performs action $a_1$ and player $\ptwo$ performs action $a_2$
then $\step(s,\word{a_1,a_2})$ is the distribution for the next states.
During each step the players choose their next moves simultaneously.
We define a \emph{mixed action} of player $\pone$ ($\ptwo$) as a distribution over $\A_\pone$ ($\A_\ptwo$),
and write $\Pi_\pone$ ($\Pi_\ptwo$) for the set of mixed actions of player $\pone$  ($\ptwo$).\footnote{
Note $\Pi_\pone$ is equivalent to $\dist(\A_\pone)$, though we choose a different symbol because
the origin of a mixed action is a simplified \emph{mixed strategy} of player $\pone$ which has type
$S^+\rightarrow\dist(\A_\pone)$. A mixed action only considers player $\pone$'s current step.
} In particular,
$\pd{a}$ is a \emph{deterministic} mixed action which always chooses $a$.
We lift the transition function $\step$ to handle mixed actions. Given $\pi_1\in\Pi_\pone$ and $\pi_2\in\Pi_\ptwo$,
for all $s, t\in S$, we have
\[\lift{\step}(s,\word{\pi_1,\pi_2})(t)=\sum_{a_1\in\A_\pone,a_2\in\A_\ptwo}\pi_1(a_1)\cdot\pi_2(a_2)\cdot\step(s,\word{a_1,a_2})(t)\]

\begin{figure}[tp]
\centering
\includegraphics[height=3.5cm]{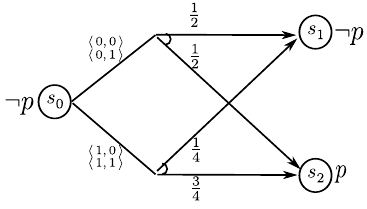}
\caption{A probabilistic game structure.}\label{fig:simplegs}
\end{figure}

\begin{example}\label{example:pgs}
Assume $\Prop=\set{p}$. A simple PGS with the initial state $s_0$ in Fig.~\ref{fig:simplegs}
can be defined as  $\G=\word{S, s_0, \L, \A,\step}$, where
\begin{itemize}
\item $S=\set{s_0,s_1,s_2}$;
\item $\L(s_0)=\L(s_1)=\emptyset$ and $\L(s_2)=\set{p}$;
\item $\A_\pone=\A_\ptwo=\set{0,1}$;
\item $\step(s_0,\langle 0,0\rangle)=\step(s_0,\langle 0,1\rangle)=\Delta$ with $\Delta(s_1)=\Delta(s_2)=\frac{1}{2}$ and
         $\step(s_0,\langle 1,0\rangle)=\step(s_0,\langle 1,1\rangle)=\Delta'$ with $\Delta'(s_1)=\frac{1}{4}$, $\Delta'(s_2)=\frac{3}{4}$;
\item $\step(s_i,a)=\pd{s_i}$ for $i\in\set{1,2}$ and $a\in\A$ ($s_1$ and $s_2$ are \emph{absorbing} states).
\end{itemize}
\end{example}

\begin{definition}\label{def:action-combine}
Given a list of mixed actions $\word{\pi_i}_{i\in I}$ (of player $\pone$),
$\word{p_i}_{i\in I}$ satisfying $\sum_{i\in I}p_i=1$,
$\sum_{i\in I}p_i\pi_i$ is a mixed action defined by $\left(\sum_{i\in I}p_i\pi_i\right)(s)(a)=\sum_{i\in I}p_i\cdot\left(\pi_i(s)(a)\right)$
for all $s\in S$ and $a\in\A_\pone$.
\end{definition} 

\begin{lemma}\label{lem:split:mix-action}
Let $s\in S$, $\pi\in\Pi_\pone$ and $\sigma=\sum_{i\in I}p_i\sigma_i\in\Pi_\ptwo$,
then $\lift\step(s,\word{\pi,\sigma})=\sum_{i\in I}p_i\cdot\lift\step(s,\word{\pi,\sigma_i})$. 
\end{lemma}
\begin{proof}
Let $t\in S$, then
\[
\begin{array}{lllr}
     && \lift\step(s, \word{\pi,\sigma})(t) \\
    &=& \sum_{a_1\in\A_1}\sum_{a_2\in\A_2}\pi(s)(a_1)\cdot\sigma(s)(a_2)\cdot\step(s,\word{a_1,a_2})(t) \\
    &=& \sum_{a_1\in\A_1}\sum_{a_2\in\A_2}\pi(s)(a_1)\cdot\sum_{i\in I}p_i\cdot\sigma_i(s)(a_2)\cdot\step(s,a_1,a_2)(t) \\
    &=& \sum_{i\in I}p_i\cdot\left(\sum_{a_1\in\A_1}\sum_{a_2\in\A_2}\pi(s)(a_1)\cdot\sigma_i(s)(a_2)\cdot\step(s,a_1,a_2)(t)\right) \\
    &=& \sum_{i\in I}p_i\cdot\lift\step(s,\word{\pi,\sigma_i})(t)
\end{array}
\]
\end{proof}

\noindent
The proof of the following lemma is similar to the above.
\begin{lemma}\label{lem:split:mix-action-2}
Let $s\in S$, $\pi=\sum_{i\in I}p_i\pi_i\in\Pi_\pone$ and $\sigma\in\Pi_\ptwo$,
then $\lift\step(s,\word{\pi,\sigma})=\sum_{i\in I}p_i\cdot\lift\step(s,\word{\pi_i,\sigma}$. 
\end{lemma}

Simulation relations in probabilistic systems
require a definition of \emph{lifting}~\cite{JL91}, which extends the relations to the domain of distributions.\footnote{
In a probabilistic system without explicit user interactions, state $s$ is simulated by state $t$ if for every $s\LT{a}\Delta_1$
there exists $t\LT{a}\Delta_2$ such that $\Delta_1$ is simulated by $\Delta_2$.
} 
Let $S$, $T$ be two sets and $\R\subseteq S\times T$ be a relation,
then $\lift\R\subseteq\dist(S)\times\dist(T)$ is a \emph{lifted relation} defined by $\Delta\,\lift\R\,\Theta$
if there exists a weight function $w:S\times T\rightarrow[0,1]$ such that
\begin{itemize}
\item $\sum_{t\in T}w(s, t)=\Delta(s)$ for all $s\in S$,
\item $\sum_{s\in S}w(s, t)=\Theta(t)$ for all $t\in T$,
\item $s\R\, t$ for all $s\in S$ and $t\in T$ with $w(s,t)>0$.
\end{itemize}

The intuition behind the lifting is that each state in the support of one distribution may correspond
to a number of states in the support of the other distribution, and vice versa.
The example in Fig.~\ref{fig:lifting} is taken from~\cite{Seg95} to show how to lift one relation.
We have two set of states $S=\set{s_1,s_2}$ and $T=\set{t_1,t_2,t_3}$,
and $\R=\set{(s_1,t_1),(s_1,t_2),(s_2,t_2),(s_2,t_3)}$.
We have $\Delta\,\lift\R\,\Theta$, where $\Delta(s_1)=\Delta(s_2)=\frac{1}{2}$
and $\Theta(t_1)=\Theta(t_2)=\Theta(t_3)=\frac{1}{3}$.
To check this, we define a weight function $w$ by:
$w(s_1,t_1)=\frac{1}{3}$, $w(s_1,t_2)=\frac{1}{6}$
$w(s_2,t_2)=\frac{1}{6}$, and $w(s_2,t_3)=\frac{1}{3}$.
The dotted lines indicate the allocation of weights required to relate
$\Delta$ to $\Theta$ via $\lift\R$.
By lifting in this way, we are able to extend the notion of alternating simulation~\cite{AHKV98}
to a probabilistic setting.

\begin{figure}[tp]
\centering
\includegraphics[height=4.0cm]{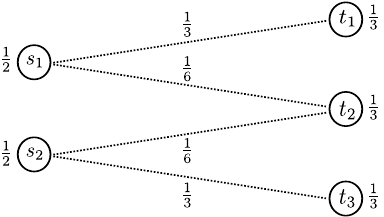}
\caption{An example showing how to lift one relation.}\label{fig:lifting}
\end{figure}

We present a property of lifted relations.
that by combining distributions that are lift-related with the same weight on both sides
we get the resulting distributions lift-related.

\begin{lemma}\label{lem:lift-comb}
Let $\R\subseteq S\times S'$ and $\word{p_i}_{i\in I}$ be a list of values satisfying
$\sum_{i\in I}p_i=1$, $\Delta_i\,\lift\R\,\Delta'_i$ for $\Delta_i\in\dist(S)$ and $\Delta_i'\in\dist(S')$
for all $i$, then $\sum_{i\in I}p_i\Delta_i\,\lift\R\,\sum_{i\in I}p_i\Delta'_i$.
\end{lemma}

Now we present the definition of PA-\pone-simulation.
\begin{definition}
\label{def:sim}
Given a PGS, 
a \emph{probabilistic alternating $\pone$-simulation} (PA-\pone-simulation)
is a relation $\Sim\ \subseteq S\times S$ such that
if~$s\Sim t$,~then
\begin{itemize}
\item $\L(s)=\L(t)$,
\item for all $\pi_1\in\Pi_\pone$, there exists
      $\pi_1'\in\Pi_\pone$, such that for all $\pi_2'\in\Pi_\ptwo$, there exists $\pi_2\in\Pi_\ptwo$, such that
      $\pd\step(s, \word{\pi_1, \pi_2})\mathrel{\lift{\Sim}}\pd\step(t, \word{\pi_1',\pi_2'})$.
\end{itemize}
\end{definition}
If $s$ PA-\pone-simulates $t$ and $t$ PA-\pone-simulates $s$, we say $s$ and $t$ are
\emph{PA-\pone-simulation equivalent}.\footnote{Alternating simulations and equivalences are for player \pone\
unless stated otherwise.}

PA-\pone-simulation has been shown to preserve a fragment of
PATL which covers the ability of player \pone\ to enforce certain temporal
requirements~\cite{ZP10}. For example, if in state $s$ player \pone\ can enforce reaching some states satisfying $p$
within $5$ transition steps and with probability at least $\frac{1}{2}$, written
$s\models\strat{\pone}^{\geq\frac{1}{2}}\Diamond^{\leq 5}p$, then for every state $t$ that
simulates $s$ with respect to \pone, i.e., $s\Sim t$ by some PA-\pone-simulation `$\Sim$',
we also have $t\models\strat{\pone}^{\geq\frac{1}{2}}\Diamond^{\leq 5}p$.

\subsection*{General Coarsest Partition Problem}
The general coarsest partition problem (GCPP) provides a characterisation of (non-probabilistic) simulation
in finite state transition systems~\cite{GPP03}. Informally, in this approach, states that are (non-probabilistic) simulation equivalent
are grouped into the same block, and all such blocks form a partition over the (finite) state space.
Based on the partition, blocks are further related by a partial order $\porder$, so that if $P\porder Q$, 
then every state in block $P$ is simulated by every state in block $Q$. The GCPP is to find, for a given PGS, 
the smallest such set of blocks.
In the literature such a methodology yields space efficient algorithms for computing the largest (non-probabilistic)
simulation relation in a finite system~\cite{GPP03,GP08}.
Similar methods have been adopted and developed to compute the largest simulation relations in the model
of probabilistic automata~\cite{Zhang08}.

We briefly review the basic notions that are required to present the GCPP problem.
A \emph{partition} over a set $S$, is a collection $\Sigma\subseteq\power(S)$ satisfying
(1) $\bigcup\Sigma=S$ and (2) $P\cap Q=\emptyset$ for all distinct \emph{blocks} $P,Q\in\Sigma$.
Given $s\in S$, write $[s]_\Sigma$ for the block in partition $\Sigma$ that contains $s$.
A partition $\Sigma_1$ is \emph{finer} than $\Sigma_2$, written $\Sigma_1\finer\Sigma_2$, if for all $P\in\Sigma_1$
there exists $Q\in\Sigma_2$ such that $P\subseteq Q$.

Given a set $S$, a \emph{partition pair} over $S$ is
$(\Sigma,\porder)$ where $\Sigma$ is a partition over $S$ and $\porder\ \subseteq\Sigma\times\Sigma$
is a partial order. Write $Part(S)$ for the set of partition pairs on $S$.
If $\Upsilon\finer\Sigma$ and $\porder$ is a relation on $\Sigma$,
then $\porder(\Upsilon)=\set{(P,Q)\mid P,Q\in\Upsilon,\exists P',Q'\in\Sigma,P\subseteq P',Q\subseteq Q',P'\porder Q'}$
is the relation on $\Upsilon$ \emph{induced} by $\porder$.
Let $(\Sigma_1,\porder_1)$ and $(\Sigma_2,\porder_2)$ be partition orders,
write $(\Sigma_1,\porder_1)\leq(\Sigma_2,\porder_2)$ if $\Sigma_1\finer\Sigma_2$,
and $\porder_1\subseteq\porder_2(\Sigma_1)$.
Define a relation $\Sim_{(\Sigma,\porder)}\subseteq S\times S$ as determined by a partition pair $(\Sigma,\porder)$
by $s\Sim_{(\Sigma,\porder)}t$ iff $[s]_\Sigma\porder[t]_\Sigma$.

Let $\rightarrow\subseteq S\times S$ be a (transition) relation and $\L:S\rightarrow 2^\Prop$ a labelling function,
then a relation $\sqsubseteq$ is a simulation on $S$ if for all $s,t\in S$ with $s\sqsubseteq t$,
we have (1) $\L(s)=\L(t)$ and (2) $s\rightarrow s'$ implies $t\rightarrow t'$ and $s'\sqsubseteq t'$.
Let $(\Sigma,\porder)$ be a partition pair on $S$, then it is \emph{stable} with respect to 
$\rightarrow$ if for all $P,Q\in\Sigma$ with $P\porder Q$ and $s\in P$ such that $s\rightarrow s'$ with $s'\in P'\in\Sigma$,
then there exists $Q'\in\Sigma$ such that for all $t\in Q$, there exists $t'\in Q'$ such that $t\rightarrow t'$.
The following result is essential to the GCPP approach, as we derive the largest simulation relation
by computing the coarsest partition pair over a finite state space.\footnote{
We choose the word \emph{coarsest} for partition pairs to make it consistent with the standard term GCPP, and it is clear 
in the context that \emph{coarsest} carries the same meaning as \emph{largest} with respect to the order $\leq$
defined on partition~pairs.}
\begin{proposition}\label{prop:gcpp}~\cite{GPP03,GP08}
Let $(\Sigma,\porder)$ be a partition pair, then it is stable with respect to $\rightarrow$ iff the induced relation
$\Sim_{(\Sigma,\porder)}$ is a simulation (with respect to $\rightarrow$).
\end{proposition}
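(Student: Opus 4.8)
The plan is to prove the two implications separately, after recording the standing GCPP convention that $\Sigma$ refines the labelling (states in one block agree on $\L$) and that $\porder$ relates only equally-labelled blocks. Under this convention the clause ``$\L(s)=\L(t)$'' of the simulation definition is immediate in both directions, so all the real work concerns the transfer clause. Throughout I write $\Sim$ for $\Sim_{(\Sigma,\porder)}$ and use freely that $\porder$ is reflexive, transitive and \emph{antisymmetric}, and that $\Sigma$ is finite.

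For ``stable $\Rightarrow$ $\Sim$ is a simulation'', take $s\Sim t$, put $P=[s]_\Sigma\porder[t]_\Sigma=Q$, and consider a move $s\rightarrow s'$ with $s'\in P'=[s']_\Sigma$. Stability, applied to $P\porder Q$ and the block $P'$ reached from $s\in P$, hands back a single block $Q'$ with $P'\porder Q'$ such that every state of $Q$---in particular $t$---has a successor in $Q'$; picking $t'\in Q'$ with $t\rightarrow t'$ gives $[s']_\Sigma=P'\porder Q'=[t']_\Sigma$, i.e.\ $s'\Sim t'$. This direction is a direct rewriting of stability through the definition of $\Sim$.

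The converse is where the content lies, and the main obstacle is exactly that stability demands a \emph{single} block $Q'$ serving \emph{all} $t\in Q$ at once, while the simulation property only matches each $t$ individually. I would unlock this with reflexivity: because $Q\porder Q$, any $t_0,t\in Q$ satisfy both $t_0\Sim t$ and $t\Sim t_0$, so block-mates simulate one another in both directions. The key lemma is then that \emph{every $\porder$-maximal block reachable from $t_0$ is also reachable from every $t\in Q$}. To see it, let $t_0\rightarrow x$ with $B=[x]_\Sigma$ maximal among $t_0$-reachable blocks. From $t_0\Sim t$ we get $t\rightarrow y$ with $B\porder C$, where $C=[y]_\Sigma$; from $t\Sim t_0$ we then get $t_0\rightarrow z$ with $C\porder[z]_\Sigma$. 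Maximality of $B$ together with $B\porder C\porder[z]_\Sigma$ forces $[z]_\Sigma=B$, hence $B\porder C\porder B$, and antisymmetry collapses this to $C=B$; thus $t$ reaches $B$. Finiteness is what guarantees the maximal block $B$ exists, and antisymmetry is what pins the two-step chain down to a single block.

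To conclude, given $P\porder Q$, $s\in P$ and $s\rightarrow s'$ with $s'\in P'$, fix any $t_0\in Q$; since $s\Sim t_0$, $t_0$ reaches some $B_0$ with $P'\porder B_0$. Choose $B^\ast$ maximal among $t_0$-reachable blocks with $B_0\porder B^\ast$; as $\set{B\mid B_0\porder B}$ is upward closed, $B^\ast$ is in fact maximal among \emph{all} $t_0$-reachable blocks, so the lemma applies and every $t\in Q$ reaches $B^\ast$. Since $P'\porder B_0\porder B^\ast$, the block $Q'=B^\ast$ is the common witness required by stability. The only delicate bookkeeping I anticipate is the observation that ``maximal above $B_0$'' coincides with ``maximal overall'' and the disciplined use of the two simulation directions inside the lemma; the rest is unfolding.
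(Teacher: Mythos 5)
Your proof is correct; note, however, that the paper itself does not prove this proposition --- it is imported from~\cite{GPP03,GP08} --- so what you have written is a self-contained proof of a result the paper only cites. The forward direction is, as you say, a direct unfolding. The converse is the real content, and you correctly identify the crux: stability demands one block $Q'$ serving all of $Q$ simultaneously, whereas the simulation clause only matches states one at a time. Your resolution --- block-mates simulate each other in both directions by reflexivity of $\porder$; every $\porder$-maximal $t_0$-reachable block is reachable from every $t\in Q$ by playing the two simulation directions against each other and invoking antisymmetry; and a block maximal above $B_0$ is maximal overall by upward closure --- is sound, and the appeals to finiteness, reflexivity and antisymmetry occur exactly where they are needed. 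Two caveats are worth flagging. First, the paper's literal definition of stability omits the clause $P'\porder Q'$ on the witness block $Q'$; your proof of ``stable $\Rightarrow$ simulation'' silently uses that clause, which is the standard (and clearly intended) GCPP condition --- without it that implication would fail, so you should state explicitly that you are reading stability with this requirement. Second, the labelling clause $\L(s)=\L(t)$ does require the convention you announce (that $\porder$ only relates blocks of equally labelled states); in the paper this holds only for partition pairs below $(\Sigma_0,\Id)$, so strictly speaking the proposition should be read under that restriction, as your preamble correctly anticipates.
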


Given a transition relation on a state space there exists a unique largest simulation relation. Thus, solutions
to GCPP provide the coarsest stable partition pairs, and they have been proved to characterise the largest simulation
relations in non-probabilistic systems~\cite{GPP03,GP08}.

\section{Solving GCPP in Probabilistic Game Structures}
\label{sec:gcpp}
In this section we extend the GCPP framework to characterise PA-simulations in PGSs.
Given a PGS $\G=\word{S, s_0, \L, \A,\step}$, a \emph{partition pair} over $\G$ is
$(\Sigma,\porder)$ where $\Sigma$ is a partition over $S$. Write $Part(\G)$ for the
set of all partition pairs over~$S$. We show how to compute the coarsest partition pair and prove
that it characterises the largest PA-simulation for a given player.

Since in probabilistic systems transitions go from states to distributions over states, we first present
a probabilistic version of \emph{stability}, as per~\cite{Zhang08}.
Let $\rightarrow\subseteq S\times\dist(S)$ be a probabilistic (transition) relation.
For a distribution $\Delta\in\dist(S)$ and $\Sigma$ a partition, write $\Delta_\Sigma$
as a distribution on $\Sigma$ defined by $\Delta_\Sigma(P)=\Delta(P)$ for all $P\in\Sigma$.
Let $(\Sigma,\porder)$ be a partition pair, it is \emph{stable} with respect to the relation $\rightarrow$,
if for all $P,Q\in\Sigma$ with $P\porder Q$ and $s\in P$ such that $s\rightarrow\Delta$, then for all $t\in Q$
there exists $t\rightarrow\Theta$ such that $\Delta_\Sigma\,\lift{\porder}\,\Theta_\Sigma$.

Another obstacle in characterising PA-simulation is that the concerned player can only partially
determine a transition.
That is, after player \pone\ performs an action on a state, the exact future distribution on next states
depends on an action from player \ptwo. Therefore, we need to (again) lift the stability condition for
PA-\pone-simulation from distributions to sets of distributions.

Let $\leq\ \subseteq S\times S$ be a partial order on a set $S$, define
$\leq_{Sm}\subseteq\power(S)\times\power(S)$, by $P\leq_{Sm}Q$ if for all $t\in Q$ there exists $s\in P$ such that $s\leq t$.
In the literature this definition is known as a `Smyth order'~\cite{Smy78}.
In a PGS, we `curry' the transition function by defining
$\lift\step(s,\pi_1)=\set{\lift\step(s, \word{\pi_1, \pi_2})\mid\pi_2\in\Pi_\ptwo}$,
which is the set of distributions that are possible if player \pone\ takes a mixed action $\pi_1\in\Pi_\pone$ on $s\in S$.
\begin{definition} (lifted stability) \label{def:lift-stability}
Let $(\Sigma,\porder)$ be a partition pair on $S$ in a PGS, it is stable with respect to player $\pone$'s choice,
if for all $\pi\in\Pi_\pone$, $P,Q\in\Sigma$ with $P\porder Q$ and $s\in P$,
there exists $\pi'\in\Pi_\pone$ such that $\lift\step(s,\pi)_\Sigma\,\lift{\porder}_{Sm}\,\lift{\step}(t,\pi')_\Sigma$ for~all~$t\in Q$.
\end{definition}
Intuitively, the Smyth order captures the way of \emph{behavioral} simulation. That is,
if $\lift\step(t,\pi')$ is at least as restrictive as $\lift\step(s,\pi)$, then whatever player \pone\
is able to enforce by performing $\pi$ in $s$, he can also enforce it by performing $\pi'$ in $t$, as player \ptwo\ 
has \emph{fewer} choices 
in $\lift\step(t,\pi')$ than in $\lift\step(s,\pi)$. At this point, for the sake of readability, if it is
clear from the context, we write $W$ for $W_\Sigma$ as the distribution $W$ mapped onto partition~$\Sigma$.

For simulation relations, it is also required that the related states
agree on their labelling. Define $\Sigma_0$ as the \emph{labelling partition} satisfying
for all $s,t\in S$, $\L(s)=\L(t)$ iff $[s]_{\Sigma_0}=[t]_{\Sigma_0}$.
Write $Part^0(\G)\subseteq Part(\G)$ for the set of partition pairs $(\Sigma,\porder)$ satisfying
$(\Sigma,\porder)\leq(\Sigma_0,\Id)$, where $\Id$ is the identity~relation.

\begin{lemma}\label{lem:stable-pa-sim}
For all $(\Sigma,\porder)\in Part^0(\G)$, if $(\Sigma,\porder)$ is a stable partition pair with respect to player \pone's choice
then $\Sim_{(\Sigma,\porder)}$ is a PA-\pone-simulation.
\end{lemma}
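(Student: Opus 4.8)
The plan is to verify that $\Sim := \Sim_{(\Sigma,\porder)}$ satisfies both clauses of Definition~\ref{def:sim} under the hypothesis that $(\Sigma,\porder)$ is stable with respect to player \pone's choice. Fix $s,t\in S$ with $s\Sim t$, i.e. $[s]_\Sigma\porder[t]_\Sigma$, and write $P=[s]_\Sigma$, $Q=[t]_\Sigma$. For the labelling clause $\L(s)=\L(t)$ I would simply use $(\Sigma,\porder)\in Part^0(\G)$: since $\porder\subseteq\Id(\Sigma)$, unfolding the induced relation $\Id(\Sigma)$ shows that $P\porder Q$ forces $P$ and $Q$ to sit inside a common block of the labelling partition $\Sigma_0$, whence $\L(s)=\L(t)$. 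This part is immediate and carries no real content.

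The simulation clause is where the work lies, and the key structural observation is that the quantifier alternation $\forall\pi_1\,\exists\pi_1'\,\forall\pi_2'\,\exists\pi_2$ of Definition~\ref{def:sim} lines up exactly with lifted stability (Definition~\ref{def:lift-stability}) once the Smyth order is unfolded. Given $\pi_1\in\Pi_\pone$, I would apply stability to $\pi_1$, the pair $P\porder Q$, and the state $s\in P$ to extract a witness $\pi_1'\in\Pi_\pone$ satisfying $\lift{\step}(s,\pi_1)_\Sigma\,\lift{\porder}_{Sm}\,\lift{\step}(t',\pi_1')_\Sigma$ for every $t'\in Q$; since our fixed $t$ lies in $Q$, this holds for $t$. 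Then, for an arbitrary $\pi_2'\in\Pi_\ptwo$, the distribution $\Theta:=\lift{\step}(t,\word{\pi_1',\pi_2'})$ gives a member $\Theta_\Sigma\in\lift{\step}(t,\pi_1')_\Sigma$, so the definition of $\lift{\porder}_{Sm}$ produces some $\Delta_\Sigma\in\lift{\step}(s,\pi_1)_\Sigma$ with $\Delta_\Sigma\,\lift{\porder}\,\Theta_\Sigma$; choosing a $\pi_2\in\Pi_\ptwo$ realising this member yields $\Delta:=\lift{\step}(s,\word{\pi_1,\pi_2})$ with $\Delta_\Sigma\,\lift{\porder}\,\Theta_\Sigma$. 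All four quantifiers are thereby discharged, and it remains only to upgrade the relation on quotient distributions to one on state distributions.

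This upgrade is the step I expect to be the main obstacle, and it isolates cleanly as a bridging lemma: \emph{if $\Delta_\Sigma\,\lift{\porder}\,\Theta_\Sigma$ then $\Delta\,\lift{\Sim_{(\Sigma,\porder)}}\,\Theta$}. To prove it I would take a block-level weight function $w:\Sigma\times\Sigma\to[0,1]$ witnessing $\Delta_\Sigma\,\lift{\porder}\,\Theta_\Sigma$ and refine it to a state-level weight function by splitting each block weight proportionally among the states of that block, setting, for $s'\in P'$ and $t'\in Q'$ with $\Delta_\Sigma(P'),\Theta_\Sigma(Q')>0$,
\[
w'(s',t')=w(P',Q')\cdot\frac{\Delta(s')}{\Delta_\Sigma(P')}\cdot\frac{\Theta(t')}{\Theta_\Sigma(Q')}
\]
and $w'(s',t')=0$ otherwise. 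A short calculation then confirms the marginals $\sum_{t'}w'(s',t')=\Delta(s')$ and $\sum_{s'}w'(s',t')=\Theta(t')$, since the intra-block factors telescope to $1$ and the $w$-marginals already match $\Delta_\Sigma,\Theta_\Sigma$; the support condition holds because $w'(s',t')>0$ forces $w(P',Q')>0$, hence $P'\porder Q'$, i.e. $s'\Sim t'$. The only care needed is the degenerate case where some $\Delta_\Sigma(P')$ or $\Theta_\Sigma(Q')$ vanishes, which is harmless as the corresponding block weights vanish too.

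Combining the bridging lemma with the quantifier-discharging argument gives $\pd{\step}(s,\word{\pi_1,\pi_2})\,\lift{\Sim}\,\pd{\step}(t,\word{\pi_1',\pi_2'})$, which is precisely the second clause of Definition~\ref{def:sim}; together with the labelling clause this shows $\Sim_{(\Sigma,\porder)}$ is a PA-\pone-simulation, completing the proof.
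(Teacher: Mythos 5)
Your proof is correct and follows the only natural route, which is the one the paper takes: the paper's own proof is just ``Straightforward by Definition~\ref{def:lift-stability}'', and your argument is a faithful unfolding of that, with the quantifier alternation matched to the Smyth order exactly as intended. The one genuinely non-trivial step you isolate --- upgrading a block-level weight function witnessing $\Delta_\Sigma\,\lift{\porder}\,\Theta_\Sigma$ to a state-level one via proportional splitting --- is exactly the detail the paper leaves implicit, and your construction and marginal calculations for it are correct.
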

\begin{proof}
Straightforward by Definition~\ref{def:lift-stability}.
\end{proof}

Obviously every PA-\pone-simulation is contained in the relation induced by $(\Sigma_0,\Id)$,
and moreover, the above lemma asserts that every stable partition pair smaller than $(\Sigma_0,\Id)$ is a PA-\pone-simulation. 
In the following, we try to compute the
coarsest partition pair by refining
$(\Sigma_0,\Id)$ until it stabilises.
The resulting stable partition pair can be proved to characterise the largest PA-\pone-simulation on the state space $S$~as~required.

We say $t$ simulates $s$ with respect to player-\pone's choice on a partition pair $(\Sigma,\porder)$ if for all $\pi\in\Pi_\pone$,
there exists $\pi'\in\Pi_\pone$ such that $\lift\step(s,\pi)\,\lift{\porder}_{Sm}\,\lift\step(t,\pi')$.
For better readability, sometimes we also say $t$ simulates $s$ on $(\Sigma,\porder)$ if it is clear from the context,
and write $s\Sim_{(\Sigma,\porder)}^\star t$. Note it is straightforward to show that $\Sim_{(\Sigma,\porder)}^\star$
is a transitive relation, by definition of $\lift{\porder}_{Sm}$.
Let $(\Sigma_1,\porder_1)\leq(\Sigma_2,\porder_2)$, we say $(\Sigma_1,\porder_1)$ is stable
on $(\Sigma_2,\porder_2)$, if for all $P,Q\in\Sigma_1$ with $P\porder_1Q$, $s\in P$ and $t\in Q$,
$t$ simulates $s$ on $(\Sigma_2,\porder_2)$.
\begin{definition}\label{def:rho}
Define an operator $\rho:Part(\G)\rightarrow Part(\G)$,
such that $\rho((\Sigma,\porder))$
is the largest partition pair $(\Sigma', \porder')\leq(\Sigma,\porder)$ that is
stable on $(\Sigma,\porder)$.
\end{definition}
The operator $\rho$ has the following properties.
\begin{lemma}\label{lem:rho-well-defined}
$\rho$ is well defined on $Part(\G)$.
\end{lemma}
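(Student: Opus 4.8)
The plan is to reduce the statement to an elementary fact about preorders on the finite set $S$. The content of ``well defined'' here is that the family of partition pairs $(\Sigma',\porder')\leq(\Sigma,\porder)$ that are stable on $(\Sigma,\porder)$ has a unique $\leq$-largest element, so that $\rho((\Sigma,\porder))$ is unambiguously specified; I would prove exactly this, and identify the largest element explicitly.

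First I would set up the dictionary between partition pairs and preorders on $S$. Every partition pair $(\Sigma,\porder)$ induces the relation $\Sim_{(\Sigma,\porder)}$, which is a preorder, since $\porder$ is a partial order (hence reflexive and transitive) on the blocks; conversely every preorder $R$ on $S$ equals $\Sim_{(\Sigma',\porder')}$ for the unique pair whose blocks are the classes of the symmetric kernel of $R$, ordered by the partial order $R$ induces on them. I would then verify the two translations that turn this into an order isomorphism. The first is $(\Sigma_1,\porder_1)\leq(\Sigma_2,\porder_2)$ iff $\Sim_{(\Sigma_1,\porder_1)}\subseteq\Sim_{(\Sigma_2,\porder_2)}$, which unfolds directly from the definitions of $\finer$ and of the induced relation $\porder_2(\Sigma_1)$: if $[s]_{\Sigma_1}\porder_1[t]_{\Sigma_1}$ then the blocks of $\Sigma_2$ containing them are $\porder_2$-related, and conversely mutual relatedness collapses $\Sigma_1$-blocks inside $\Sigma_2$-blocks. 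The second is that $(\Sigma',\porder')$ is stable on $(\Sigma,\porder)$ iff $\Sim_{(\Sigma',\porder')}\subseteq\Sim_{(\Sigma,\porder)}^\star$; this is immediate from the definition of stability, because $P\porder'Q$ with $s\in P$, $t\in Q$ is precisely $s\Sim_{(\Sigma',\porder')}t$, and ``$t$ simulates $s$ on $(\Sigma,\porder)$'' is precisely $s\Sim_{(\Sigma,\porder)}^\star t$.

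With this dictionary, the candidate pairs correspond exactly to the preorders contained in $T:=\Sim_{(\Sigma,\porder)}\cap\Sim_{(\Sigma,\porder)}^\star$. Here I would use that $\Sim_{(\Sigma,\porder)}^\star$ is itself a preorder. Transitivity is already noted in the text, from the definition of $\lift{\porder}_{Sm}$. For reflexivity, to see $s\Sim_{(\Sigma,\porder)}^\star s$ I take $\pi'=\pi$ and observe that the identity weight function witnesses $\Theta\,\lift{\porder}\,\Theta$ for every distribution $\Theta$ (since $\porder$ is reflexive), whence $\lift\step(s,\pi)\,\lift{\porder}_{Sm}\,\lift\step(s,\pi)$. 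Thus both $\Sim_{(\Sigma,\porder)}$ and $\Sim_{(\Sigma,\porder)}^\star$ are preorders, and since the intersection of two preorders is again a preorder (reflexivity and transitivity are both preserved under intersection), $T$ is a preorder.

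Finally I would conclude by exhibiting the maximum. The partition pair associated with $T$ satisfies $T\subseteq\Sim_{(\Sigma,\porder)}$, so it is $\leq(\Sigma,\porder)$, and $T\subseteq\Sim_{(\Sigma,\porder)}^\star$, so it is stable on $(\Sigma,\porder)$; hence it is a candidate. Moreover any candidate $(\Sigma',\porder')$ has $\Sim_{(\Sigma',\porder')}\subseteq\Sim_{(\Sigma,\porder)}$ and $\Sim_{(\Sigma',\porder')}\subseteq\Sim_{(\Sigma,\porder)}^\star$, hence $\Sim_{(\Sigma',\porder')}\subseteq T$, so it is $\leq$ the pair of $T$. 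Therefore the pair associated with $T$ is the unique $\leq$-largest stable refinement, which is by definition $\rho((\Sigma,\porder))$, and $\rho$ is well defined. I expect the only real work to be the bookkeeping in the dictionary step, in particular the clean correspondence between $\leq$ and $\subseteq$ and the easy reflexivity of $\Sim_{(\Sigma,\porder)}^\star$; the decisive point that makes the argument collapse to a one-line maximisation is the transitivity of $\Sim_{(\Sigma,\porder)}^\star$, which keeps $T$ a preorder (equivalently, it guarantees that the candidate set is closed under the join operation, whose induced relation is the transitive closure of the union, since that closure still lands inside both transitive relations).
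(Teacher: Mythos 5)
Your argument is correct, but it reaches the conclusion by a genuinely different route than the paper. The paper's proof constructs the maximum explicitly: it restricts $\Sim^\star_{(\Sigma,\porder)}$ to each block $P\in\Sigma$ to get a preorder $\leq_P$, splits $P$ by the kernel of $\leq_P$, orders the resulting sub-blocks by $\leq_P$ inside each $P$ and by the relation induced from $\porder\setminus\Id$ across old blocks, and then verifies maximality against an arbitrary stable competitor. You instead transport the entire problem through the order isomorphism between partition pairs and preorders on $S$ (which does hold: antisymmetry of $\porder$ guarantees the kernel of $\Sim_{(\Sigma,\porder)}$ recovers $\Sigma$, and your two translations of $\leq$ into $\subseteq$ and of ``stable on $(\Sigma,\porder)$'' into containment in $\Sim^\star_{(\Sigma,\porder)}$ both unfold correctly from the definitions). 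The candidate set then becomes the set of preorders contained in $T=\Sim_{(\Sigma,\porder)}\cap\Sim^\star_{(\Sigma,\porder)}$, and once $T$ is itself seen to be a preorder --- reflexivity of $\Sim^\star_{(\Sigma,\porder)}$ via the identity weight function, transitivity as asserted in the text --- existence and uniqueness of the maximum is immediate. Your route buys a shorter and more transparent well-definedness argument that isolates exactly which closure properties of $\Sim^\star_{(\Sigma,\porder)}$ are needed; the paper's route buys an explicit block-splitting recipe that feeds directly into the \textsf{Split} procedure. One point worth flagging: your maximum and the paper's explicit formula do not literally coincide on cross-block pairs. For sub-blocks $X_1\subseteq P$ and $X_2\subseteq Q$ with $P\neq Q$ and $P\porder Q$, the paper's $\porder'$ retains $(X_1,X_2)$ unconditionally, whereas your $T$ retains it only when $s\Sim^\star_{(\Sigma,\porder)}t$ for $s\in X_1$ and $t\in X_2$; under the definition of stability as stated, it is your smaller relation that is guaranteed to be stable on $(\Sigma,\porder)$, so your identification of the maximum is the one that matches the definition of $\rho$ exactly, while the paper's displayed $\porder'$ should be read as an upper bound whose cross-block part still needs pruning.
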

\begin{proof}
We show that given a partition pair $(\Sigma,\porder)$ on $S$, $\rho((\Sigma,\porder))$ is a unique partition pair.
Let $P\in\Sigma$.
Define $\leq_P\subseteq P\times P$ by $s\leq_P t$ if $s\Sim_{(\Sigma,\porder)}^\star t$.
Then $\leq_P$ is a preorder on $P$, from which we define a partition pair
$(\Sigma_P, \porder_P)$ where
$\Sigma_P=\set{\set{t\in P\mid s\leq_Pt\wedge t\leq_Ps}\mid s\in P}$
and $X_1\,\porder_P\,X_2$
if there exist $s\in X_1$ and $t\in X_2$ such that $s\leq_P t$.
Define $\rho((\Sigma,\porder))=(\Sigma', \porder')$ with $\Sigma'=\bigcup_{P\in\Sigma}\Sigma_P$ and
$\porder'=(\porder\setminus Id)(\Sigma')\cup\bigcup_{P\in\Sigma}\porder_P$.
For the definition of $\porder'$, the first part of the union $(\porder\setminus Id)(\Sigma')$ is the relation on $\Sigma'$ as induced from the nonreflexive part of $\porder$, and in the second part each $\porder_P$ gives a new relation generated inside  block $P$ which is stable on $(\Sigma,\porder)$. Note that each $\porder_P$ is acyclic, and thus a partial order on $\Sigma_P$.
This implies that $\porder'$ is a partial order on $\Sigma'$.

We show that $(\Sigma', \porder')$ is indeed the largest such partition pair.
Suppose there exists $(\Sigma'', \porder'')$ such that $(\Sigma'', \porder'')\leq(\Sigma,\porder)$ and
it is stable on $(\Sigma,\porder)$,
we show that $(\Sigma'', \porder'')\leq(\Sigma',\porder')$.
\begin{itemize}
\item Let $P\in\Sigma''$ and $s\in P$, then there exists $P'\in\Sigma'$ such that $s\in P'$. First we have
      $P\subseteq[s]_\Sigma$ by $\Sigma''\finer\Sigma$.
      For all $t\in P$, we have $s\Sim_{(\Sigma,\porder)}^\star t$ and $t\Sim_{(\Sigma,\porder)}^\star s$,
      by $P$ stable on $(\Sigma,\porder)$. By definition we have $s\leq_{[s]_\Sigma}t$
      and $t\leq_{[s]_\Sigma}s$, and thus $t\in P'$.
      Therefore, $P\subseteq P'$. This proves $\Sigma''\finer\Sigma'$.

\item Let $P,Q\in\Sigma''$ and $P\porder''Q$. Since $\Sigma''\finer\Sigma'$, there exist $P', Q'\in\Sigma'$
      such that $P\subseteq P'$ and $Q\subseteq Q'$. We need to show that $P'\porder'Q'$.
      Taking $s_1\in P'$ and $s_2\in Q'$, we show that $s_1\Sim_{(\Sigma,\porder)}^\star s_2$.
      Let $t_1\in P$ and $t_2\in Q$, we have $t_1\Sim_{(\Sigma,\porder)}^\star t_2$. Also within
      $[s_1]_\Sigma$ we have $s_1\porder_{[s_1]_\Sigma}t_1$,
      and within $[s_2]_\Sigma$ we have $t_2\porder_{[s_2]_\Sigma}s_2$. As both $\porder_{[s_1]_\Sigma}$
      and $\porder_{[s_2]_\Sigma}$ are contained in $\Sim_{(\Sigma,\porder)}^\star$,
      We apply transitivity to get $s_1\Sim_{(\Sigma,\porder)}^\star s_2$. Therefore,
      $P'\porder'Q'$. This shows that $(P,Q)\in\ \porder'(\Sigma'')$, and thus $\porder''\subseteq\porder'(\Sigma'')$.

\qed
\end{itemize}
\end{proof}

The following lemma is used in the proof of Lemma~\ref{lem:rho-monotone}.

\begin{lemma}\label{lem:partition:order}
If $(\Sigma_1,\porder_1)\leq(\Sigma_2,\porder_2)$ and there are distributions $\Delta,\Delta'$
satisfying $\Delta_{\Sigma_1}\,\lift{\porder_1}\,\Delta'_{\Sigma_1}$, then
$\Delta_{\Sigma_2}\,\lift{\porder_2}\,\Delta'_{\Sigma_2}$.
\end{lemma}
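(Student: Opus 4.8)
The plan is to unfold the definition of lifting to obtain a weight function witnessing $\Delta_{\Sigma_1}\,\lift{\porder_1}\,\Delta'_{\Sigma_1}$, and then to \emph{push it forward} along the coarsening map from $\Sigma_1$ to $\Sigma_2$ in order to build a witness for $\Delta_{\Sigma_2}\,\lift{\porder_2}\,\Delta'_{\Sigma_2}$. Concretely, from the hypothesis I get a function $w:\Sigma_1\times\Sigma_1\to[0,1]$ such that $\sum_{Q\in\Sigma_1}w(P,Q)=\Delta(P)$ for all $P$, $\sum_{P\in\Sigma_1}w(P,Q)=\Delta'(Q)$ for all $Q$, and $P\porder_1 Q$ whenever $w(P,Q)>0$. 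The goal is to aggregate $w$ into a weight function on the coarser partition.

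Since $(\Sigma_1,\porder_1)\leq(\Sigma_2,\porder_2)$ gives $\Sigma_1\finer\Sigma_2$, each block $P\in\Sigma_1$ is contained in a unique block of $\Sigma_2$; write $f(P)$ for that block, so $P\subseteq f(P)$, and $f$ is well defined because the blocks of $\Sigma_2$ are pairwise disjoint. I would then define $w':\Sigma_2\times\Sigma_2\to[0,1]$ by aggregating $w$ over the fibres of $f$, namely $w'(X,Y)=\sum_{f(P)=X,\,f(Q)=Y}w(P,Q)$, where $P,Q$ range over $\Sigma_1$, and claim that $w'$ witnesses $\Delta_{\Sigma_2}\,\lift{\porder_2}\,\Delta'_{\Sigma_2}$.

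Checking the two marginal conditions is routine bookkeeping on finite sums: summing $w'(X,Y)$ over all $Y\in\Sigma_2$ amounts to summing $w(P,Q)$ over all $Q\in\Sigma_1$ and over those $P$ with $f(P)=X$, which yields $\sum_{f(P)=X}\Delta(P)=\Delta(X)=\Delta_{\Sigma_2}(X)$, using that the $\Sigma_1$-blocks inside $X$ partition $X$; the second marginal is symmetric. The step carrying the real content is the support condition, and I expect it to be the only place where care is needed. Here $w'(X,Y)>0$ forces some $P,Q\in\Sigma_1$ with $f(P)=X$, $f(Q)=Y$ and $w(P,Q)>0$, whence $P\porder_1 Q$. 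The hypothesis also gives $\porder_1\subseteq\porder_2(\Sigma_1)$, so by the definition of the induced relation there are blocks $P',Q'\in\Sigma_2$ with $P\subseteq P'$, $Q\subseteq Q'$ and $P'\porder_2 Q'$; uniqueness of the containing $\Sigma_2$-block then forces $P'=f(P)=X$ and $Q'=f(Q)=Y$, giving $X\porder_2 Y$ as required. Threading this containment correctly through the definition of $\porder_2(\Sigma_1)$, and invoking disjointness of the $\Sigma_2$-blocks, is the crux; everything else is a direct rearrangement of sums.
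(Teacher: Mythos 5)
Your proposal is correct and is precisely the argument the paper sketches: the paper's one-line proof says to ``reuse the same weight function'' on the coarser partition, and your aggregation $w'(X,Y)=\sum_{f(P)=X,\,f(Q)=Y}w(P,Q)$ is the natural formalisation of that, with the marginal and support checks carried out correctly (the support condition via $\porder_1\subseteq\porder_2(\Sigma_1)$ and disjointness of $\Sigma_2$-blocks). No gaps; you have simply written out the details the paper omits.
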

\begin{proof} (sketch)
By reusing the same weight function for $\porder_1$ on the partition $\Sigma_1$ for
$\porder_2$ on the coarser partition $\Sigma_2$. \qed
\end{proof}

\begin{lemma}\label{lem:rho-monotone}
$\rho$ is monotonic on $(Part^0(\G), \leq)$.
\end{lemma}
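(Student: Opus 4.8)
The plan is to exploit the defining maximality of $\rho$. Fix $(\Sigma_1,\porder_1)\leq(\Sigma_2,\porder_2)$ in $Part^0(\G)$ and write $(\Sigma_1',\porder_1')=\rho((\Sigma_1,\porder_1))$. Since $\rho((\Sigma_2,\porder_2))$ is, by Definition~\ref{def:rho}, the \emph{largest} partition pair below $(\Sigma_2,\porder_2)$ that is stable on $(\Sigma_2,\porder_2)$, it suffices to exhibit $(\Sigma_1',\porder_1')$ as one such candidate, i.e.\ to prove that (i) $(\Sigma_1',\porder_1')\leq(\Sigma_2,\porder_2)$ and (ii) $(\Sigma_1',\porder_1')$ is stable on $(\Sigma_2,\porder_2)$. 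Maximality then forces $\rho((\Sigma_1,\porder_1))=(\Sigma_1',\porder_1')\leq\rho((\Sigma_2,\porder_2))$, which is exactly the asserted monotonicity.

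For (i) I would first record that $\leq$ is transitive. The partition component is immediate because $\finer$ is transitive, and for the order component I would check that induced orders compose: if $\porder_1'\subseteq\porder_1(\Sigma_1')$ and $\porder_1\subseteq\porder_2(\Sigma_1)$, then unfolding the definition of the induced relation twice and chaining the block inclusions $P\subseteq P'\subseteq P''$ and $Q\subseteq Q'\subseteq Q''$ yields $\porder_1'\subseteq\porder_2(\Sigma_1')$. Since $(\Sigma_1',\porder_1')\leq(\Sigma_1,\porder_1)$ holds by Definition~\ref{def:rho} and $(\Sigma_1,\porder_1)\leq(\Sigma_2,\porder_2)$ by hypothesis, transitivity delivers (i).

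Step (ii) is the main obstacle, and I would isolate it as the claim that $(\Sigma_1,\porder_1)\leq(\Sigma_2,\porder_2)$ implies $s\Sim_{(\Sigma_1,\porder_1)}^\star t\Rightarrow s\Sim_{(\Sigma_2,\porder_2)}^\star t$ for all $s,t$. Granting the claim, stability transfers directly: for $P,Q\in\Sigma_1'$ with $P\porder_1'Q$, $s\in P$ and $t\in Q$, stability of $(\Sigma_1',\porder_1')$ on $(\Sigma_1,\porder_1)$ gives $s\Sim_{(\Sigma_1,\porder_1)}^\star t$, hence $s\Sim_{(\Sigma_2,\porder_2)}^\star t$ by the claim, which is precisely stability on $(\Sigma_2,\porder_2)$ (the blocks and the order $\porder_1'$ are unchanged; only the reference pair moves from $(\Sigma_1,\porder_1)$ to $(\Sigma_2,\porder_2)$). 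To prove the claim I would keep the player-$\pone$ witnesses fixed across partitions: given $\pi\in\Pi_\pone$, take the $\pi'\in\Pi_\pone$ witnessing $s\Sim_{(\Sigma_1,\porder_1)}^\star t$, so that $\lift\step(s,\pi)\,\lift{\porder_1}_{Sm}\,\lift\step(t,\pi')$ over $\Sigma_1$. The sets $\lift\step(s,\pi)$ and $\lift\step(t,\pi')$ do not depend on the partition, so for an arbitrary $\Theta\in\lift\step(t,\pi')$ the Smyth order supplies some $\Delta\in\lift\step(s,\pi)$ with $\Delta_{\Sigma_1}\,\lift{\porder_1}\,\Theta_{\Sigma_1}$; invoking Lemma~\ref{lem:partition:order} on the \emph{same} pair $\Delta,\Theta$ upgrades this to $\Delta_{\Sigma_2}\,\lift{\porder_2}\,\Theta_{\Sigma_2}$. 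As $\Theta$ was arbitrary, $\lift\step(s,\pi)\,\lift{\porder_2}_{Sm}\,\lift\step(t,\pi')$ over $\Sigma_2$ with the same $\pi'$, and as $\pi$ was arbitrary, $s\Sim_{(\Sigma_2,\porder_2)}^\star t$.

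The delicate point, and where I would be most careful, is the interaction between the Smyth-order quantifier alternation and the change of partition: the argument works precisely because the \emph{same} mixed action $\pi'$ and, for each $\Theta$, the \emph{same} distribution witness $\Delta$ can be reused when passing from $\Sigma_1$ to the coarser $\Sigma_2$, with Lemma~\ref{lem:partition:order} transporting a single lifted comparison $\Delta_{\Sigma_1}\,\lift{\porder_1}\,\Theta_{\Sigma_1}$ to $\Delta_{\Sigma_2}\,\lift{\porder_2}\,\Theta_{\Sigma_2}$. No re-choice of actions or weight functions is required, so monotonicity is essentially inherited from the coarsening behaviour of the lifted order already captured by Lemma~\ref{lem:partition:order}.
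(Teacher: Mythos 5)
Your proposal is correct and follows essentially the same route as the paper's proof: exhibit $\rho((\Sigma_1,\porder_1))$ as a candidate that is stable on $(\Sigma_2,\porder_2)$ by transporting each lifted comparison via Lemma~\ref{lem:partition:order} with the same witnesses, then invoke the maximality in Definition~\ref{def:rho}. Your explicit checks of the transitivity of $\leq$ and of the Smyth-order quantifier structure are details the paper leaves implicit, but they do not change the argument.
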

\begin{proof}
Let $(\Sigma_1,\porder_1)\leq(\Sigma_2,\porder_2)$,
$(\Sigma'_1,\porder'_1)=\rho((\Sigma_1,\porder_1))$, $(\Sigma'_2,\porder'_2)=\rho((\Sigma_2,\porder_2))$.
We show that $(\Sigma'_1,\porder'_1)\,\leq\,(\Sigma'_2,\porder'_2)$.

We first prove that $(\Sigma'_1,\porder'_1)$ is stable on $(\Sigma_2,\porder_2)$.
Let $P,Q\in\Sigma_1'$ such that $P\porder_1' Q$, then for all $s\in P$, $t\in Q$ and $\pi\in\Pi_\pone$, there exists
$\pi'\in\Pi_\pone$ such that $\lift\step(s,\pi)\,\lift{\porder_1}\,\lift\step(t,\pi')$. Then by Lemma~\ref{lem:partition:order},
we also have $\lift\step(s,\pi)\,\lift{\porder_2}\,\lift\step(t,\pi')$.
By definition of $\rho$, we have that the partition pair $(\Sigma'_2,\porder'_2)$ is the unique largest
partition pair that is stable on $(\Sigma_2,\porder_2)$.
As $(\Sigma'_1,\porder'_1)$ is stable on $(\Sigma_2,\porder_2)$,
it must be the case that $(\Sigma'_1,\porder'_1)\,\leq\,(\Sigma'_2,\porder'_2)$. \qed
\end{proof}

Lemma~\ref{lem:stable-pa-sim} ensures that for all $(\Sigma,\porder)\in Part^0(\G)$,
$\Sim_{(\Sigma,\porder)}$ is a PA-\pone-simulation if $\rho((\Sigma,\porder))=(\Sigma,\porder)$,
i.e., $(\Sigma,\porder)$ is a fixpoint of $\rho$. However, we still need to find
the largest PA-\pone-simulation.
The following result indicates that if $S$ is finite, the coarsest stable partition pair
achieved by repetitively applying $\rho$
on $(\Sigma_0,\Id)$
indeed yields the largest PA-\pone-simulation.\footnote{
The following proof resembles the classical paradigm of finding the least fixpoint in an $\omega$-chain
of a complete partial order by treating $(\Sigma_0,\Id)$ as $\bot$. 
However, here we also need that fixpoint to represent the largest PA-\pone-simulation.}
Define $\rho^0(X)=X$ and $\rho^{n+1}(X)=\rho(\rho^n(X))$ for partition
pairs~$X$.

\begin{theorem}\label{thm:gcpp-max} 
Let $(\Sigma,\porder)=\bigcap_{i\in\Nat}\rho^i((\Sigma_0,\Id))$, then $\Sim_{(\Sigma,\porder)}$ is the largest PA-\pone-simulation on $\G$.
\end{theorem}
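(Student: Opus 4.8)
The plan is to establish the two inclusions $\Sim_{(\Sigma,\porder)}\subseteq\Sim_{max}$ and $\Sim_{max}\subseteq\Sim_{(\Sigma,\porder)}$, where $\Sim_{max}$ denotes the union of all PA-$\pone$-simulations on $\G$. First I would record the preliminary facts that make $\Sim_{max}$ well behaved. PA-$\pone$-simulations are closed under union, since monotonicity of the lifting propagates the defining clause of Definition~\ref{def:sim} to the union; hence $\Sim_{max}$ is the largest PA-$\pone$-simulation and is itself one. It is reflexive (the identity is a PA-$\pone$-simulation) and, being closed under relational composition (chain the two response choices $\pi'$ and compose the witnessing weight functions), transitive, so $\Sim_{max}$ is a preorder that respects labels. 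Such a preorder determines a partition pair $(\Sigma_{max},\porder_{max})\in Part^0(\G)$ with $\Sim_{(\Sigma_{max},\porder_{max})}=\Sim_{max}$, so that the order $\leq$ on $Part^0(\G)$ and inclusion of induced relations align: $(\Sigma_1,\porder_1)\leq(\Sigma_2,\porder_2)$ implies $\Sim_{(\Sigma_1,\porder_1)}\subseteq\Sim_{(\Sigma_2,\porder_2)}$.

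Next I would analyse the iteration. Since $\rho(X)\leq X$ by Definition~\ref{def:rho} and $\rho$ is monotonic on $(Part^0(\G),\leq)$ by Lemma~\ref{lem:rho-monotone}, an easy induction shows that the chain $\rho^i((\Sigma_0,\Id))$ is $\leq$-descending and stays inside $Part^0(\G)$, with $(\Sigma_0,\Id)$ as its top. Because $S$ is finite there are only finitely many partition pairs, so the chain stabilises at some $\rho^N((\Sigma_0,\Id))$; as this value is a lower bound of the whole chain and lies in it, it equals the intersection $(\Sigma,\porder)$ and is a fixpoint, $\rho((\Sigma,\porder))=(\Sigma,\porder)$. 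Being a fixpoint, $(\Sigma,\porder)$ is stable on itself, i.e.\ stable with respect to player $\pone$'s choice, and since $(\Sigma,\porder)\in Part^0(\G)$, Lemma~\ref{lem:stable-pa-sim} yields that $\Sim_{(\Sigma,\porder)}$ is a PA-$\pone$-simulation; in particular $\Sim_{(\Sigma,\porder)}\subseteq\Sim_{max}$.

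For the reverse inclusion I would prove $(\Sigma_{max},\porder_{max})\leq\rho^i((\Sigma_0,\Id))$ for every $i$ by induction on $i$. The base case $(\Sigma_{max},\porder_{max})\leq(\Sigma_0,\Id)$ holds because $\Sim_{max}$ relates only equally labelled states, so $\Sigma_{max}$ refines $\Sigma_0$ and $\porder_{max}\subseteq\Id(\Sigma_{max})$. For the step, write $(\Sigma_i,\porder_i)=\rho^i((\Sigma_0,\Id))$ and assume $(\Sigma_{max},\porder_{max})\leq(\Sigma_i,\porder_i)$; by the maximality built into $\rho$ it suffices to check that $(\Sigma_{max},\porder_{max})$ is stable on $(\Sigma_i,\porder_i)$. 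From $P\porder_{max}Q$, $s\in P$, $t\in Q$ I know that $t$ simulates $s$ on $(\Sigma_{max},\porder_{max})$, i.e.\ for each $\pi$ there is $\pi'$ with $\lift\step(s,\pi)\,\lift{\porder_{max}}_{Sm}\,\lift\step(t,\pi')$; I then transport this along the refinement using Lemma~\ref{lem:partition:order} inside the Smyth order, upgrading each witnessing lifting $\Delta_{\Sigma_{max}}\lift{\porder_{max}}\Theta_{\Sigma_{max}}$ to $\Delta_{\Sigma_i}\lift{\porder_i}\Theta_{\Sigma_i}$ while preserving the ``for all $\Theta$ there is $\Delta$'' quantification of $\lift{\cdot}_{Sm}$, so that $t$ simulates $s$ on $(\Sigma_i,\porder_i)$. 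Taking $i=N$ gives $(\Sigma_{max},\porder_{max})\leq(\Sigma,\porder)$, hence $\Sim_{max}\subseteq\Sim_{(\Sigma,\porder)}$; combined with the previous paragraph this forces $\Sim_{(\Sigma,\porder)}=\Sim_{max}$, the largest PA-$\pone$-simulation.

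The main obstacle I anticipate is the bridge between the relational Definition~\ref{def:sim} and the partition-pair machinery: making precise that the largest PA-$\pone$-simulation is a preorder whose partition pair lies in $Part^0(\G)$, and that the block-level lifting $\lift\porder$ on $\Sigma$-distributions coincides with the state-level lifting $\lift{\Sim_{(\Sigma,\porder)}}$, so that ``stable on itself'' is genuinely equivalent to the simulation clause. The transport of liftings inside the Smyth order in the inductive step is the other delicate point, but it reduces cleanly to Lemma~\ref{lem:partition:order} once the quantifier structure of $\lift{\cdot}_{Sm}$ is unwound.
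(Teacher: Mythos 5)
Your proposal is correct and follows essentially the same route as the paper's own proof: both build the partition pair $(\Sigma_{max},\porder_{max})$ (the paper's $(\Sigma^+,\porder^+)$) from the largest PA-\pone-simulation, show it is stable and below $(\Sigma_0,\Id)$, push it below every iterate $\rho^i((\Sigma_0,\Id))$ via monotonicity/maximality of $\rho$, and observe that the finitely stabilising chain ends in a stable fixpoint whose induced relation is a PA-\pone-simulation by Lemma~\ref{lem:stable-pa-sim}, forcing equality with the largest one. Your inductive step merely inlines the proof of Lemma~\ref{lem:rho-monotone}, and you are somewhat more explicit than the paper about why the largest PA-\pone-simulation is a label-respecting preorder and about the bridge between the state-level and block-level liftings, which the paper's sketch also leaves implicit.
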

\begin{proof}(sketch)
Let $\Sim^+$ be the largest PA-\pone-simulation on $\G$. Define a set
$\Sigma^+=\set{\set{t\in S\mid s\Sim^+t\wedge t\Sim^+s}\mid s\in S}$. Since $\Sim^+$
is the largest PA-\pone-simulation, it can be shown that $\Sim^+$ is reflexive, symmetric
and transitive within each block $P\in\Sigma^+$. Moreover, we define a relation $\porder^+$
by $P\porder^+Q$ if there exists $s\in P$ and $t\in Q$ such that $s\Sim^+t$, and it can be
shown that $\porder^+$ is a partial order on $\Sigma^+$.
Then $(\Sigma^+,\porder^+)$ forms a partition pair on $\G$, and furthermore, it is stable,
and we also have $(\Sigma^+,\porder^+)\leq(\Sigma_0,\Id)$.

We apply $\rho$ on both sides. By Lemma~\ref{lem:rho-monotone} (monotonicity), and $(\Sigma^+,\porder^+)$
being stable, we have $(\Sigma^+,\porder^+)=\rho^i((\Sigma^+,\porder^+))\leq\rho^i((\Sigma_0,\Id))$ for all $i\in\Nat$.
As $Part(\G)$ is finite, there exists $j\in\Nat$, such that $\rho^j((\Sigma_0,\Id))=\rho^{j+1}((\Sigma_0,\Id))$.
Therefore, $\rho^j((\Sigma_0,\Id))$ is a stable partition pair, and $\Sim_{\rho^j((\Sigma_0,\Id))}$
is a PA-\pone-simulation by Lemma~\ref{lem:stable-pa-sim}. Straightforwardly we have
$\Sim^+\subseteq\Sim_{\rho^j((\Sigma_0,\Id))}$. Since $\Sim^+$ is the largest PA-\pone-simulation by assumption,
we have $\Sim^+=\Sim_{\rho^j((\Sigma_0,\Id))}$, and the result directly follows. \qed
\end{proof}

\section{A Decision Procedure for PA-\pone-Simulation}
\label{sec:pa-sim-alg}
Efficient algorithms for simulation in the non-probabilistic setting sometimes apply predecessor based methods~\cite{HHK95,GPP03}
for splitting blocks and refining partitions. This method can no longer be applied for simulations in the probabilistic setting,
as the transition functions now map a state to a state distribution rather than a single state, and simulation relation needs to
be \emph{lifted} to handle distributions. The algorithms in~\cite{ZHEJ08,Zhang08} follow the approaches in~\cite{BEM00} by reducing
the problem of deciding a weight function on lifted relations to checking the value of a maximal flow problem.
This method, however, does \emph{not} apply to combined transitions, where a more general solution is required.
Algorithms for deciding probabilistic bisimulations~\cite{CS02}
reduce the problem on checking weight functions with combined choices to solutions in linear programming (LP),
which are known to be decidable in polynomial time~\cite{Kar84}.\footnote{The maximal flow problem
is a special instance of an LP problem, which can be solved more efficiently~\cite{AMO93}.}

In our approach, simulation relations are characterised by partition pairs in the solutions to the GCPP.
Starting from the initial partition pair $(\Sigma_0,\Id)$, we gradually 
refine the partition by checking whether each pair of states in the same block can simulate
each other with respect to player \pone's choice on a chosen pivot block.
When deciding whether $s$ is able to simulate $t$, 
we need to examine potentially infinitely many mixed actions in $\Pi_\pone$.
This problem can be moderated by the following observations.
First we show that for $s$ to be simulated by $t$, it is only required to check
all deterministic choices of player \pone\ on $s$.

\begin{lemma}\label{lem:simplify1}
Let $(\Sigma,\porder)$ be a partition pair, then $t$ simulates $s$
on $(\Sigma,\porder)$ if for all $a\in\A_\pone$, there exists $\pi\in\Pi_\pone$ such that
$\lift\step(s,\pd{a})\,\lift\porder_{Sm}\,\lift\step(t,\pi)$.
\end{lemma}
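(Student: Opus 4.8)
The plan is to exploit the fact that the lifted transition $\lift\step(s,\word{\pi_1,\pi_2})$ is linear in player $\pone$'s mixed action $\pi_1$, so that an arbitrary $\pi\in\Pi_\pone$ decomposes as the convex combination $\sum_{a\in\A_\pone}\pi(a)\,\pd a$ of its deterministic components. First I would unfold the goal: to show $t$ simulates $s$ on $(\Sigma,\porder)$ I must produce, for every $\pi\in\Pi_\pone$, a \emph{single} $\pi'\in\Pi_\pone$ with $\lift\step(s,\pi)\,\lift{\porder}_{Sm}\,\lift\step(t,\pi')$; that is, such that for every $\Theta\in\lift\step(t,\pi')$ there exists $\Delta\in\lift\step(s,\pi)$ with $\Delta_\Sigma\,\lift{\porder}\,\Theta_\Sigma$.

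Fix such a $\pi$. For each $a\in\A_\pone$ the hypothesis supplies $\pi_a\in\Pi_\pone$ with $\lift\step(s,\pd a)\,\lift{\porder}_{Sm}\,\lift\step(t,\pi_a)$. The natural candidate is $\pi'=\sum_{a\in\A_\pone}\pi(a)\,\pi_a$. Using bilinearity of $\lift\step$ in the two mixed actions one checks that $\lift\step(t,\word{\pi',\pi_2})=\sum_{a}\pi(a)\,\lift\step(t,\word{\pi_a,\pi_2})$ for every $\pi_2\in\Pi_\ptwo$, so that every distribution in $\lift\step(t,\pi')$ is a $\pi$-weighted average of distributions $\Theta_a\in\lift\step(t,\pi_a)$ sharing a common second component $\pi_2$.

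Next I would verify the Smyth inequality. Fixing $\Theta=\lift\step(t,\word{\pi',\pi_2})$ and writing $\Theta=\sum_a\pi(a)\,\Theta_a$ as above, the hypothesis for each $a$ yields a matching $\Delta_a\in\lift\step(s,\pd a)$ with $(\Delta_a)_\Sigma\,\lift{\porder}\,(\Theta_a)_\Sigma$. The lifted order is preserved under convex combinations: combining the witnessing weight functions $w_a$ into $\sum_a\pi(a)\,w_a$, or equivalently summing the up-closure inequalities $(\Delta_a)_\Sigma(\upclose{P})\le(\Theta_a)_\Sigma(\upclose{P})$ of Lemma~\ref{lem:lifting-upclosure} over the blocks $P\in\Sigma$, gives $\Delta_\Sigma\,\lift{\porder}\,\Theta_\Sigma$ for the combined distribution $\Delta=\sum_a\pi(a)\,\Delta_a$. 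Thus $\Delta$ dominates $\Theta$ in the required sense.

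The step I expect to be the main obstacle is showing that this combined witness $\Delta$ actually lies in $\lift\step(s,\pi)$, since the Smyth order demands the dominating distribution be genuinely reachable by a single pair of mixed actions. Each $\Delta_a$ is realised by its own player-$\ptwo$ response $\pi_2^a\in\Pi_\ptwo$, so a priori $\Delta=\sum_a\pi(a)\,\lift\step(s,\word{\pd a,\pi_2^a})$ amounts to letting player $\ptwo$ \emph{correlate} its choice with player $\pone$'s action, whereas $\lift\step(s,\pi)$ contains only the uncorrelated combinations $\lift\step(s,\word{\pi,\pi_2^\ast})$ for a single $\pi_2^\ast$. Collapsing the family $(\pi_2^a)_{a}$ into one mixed action $\pi_2^\ast$ that still dominates $\Theta$ on every up-closure is the crux; I would attack it with a minimax / linear-programming argument over the finite family of up-closed sets provided by Lemma~\ref{lem:lifting-upclosure}, exploiting the convexity of both $\lift\step(s,\pi)$ and $\Pi_\ptwo$, and I would watch carefully whether the per-action matches can indeed be aggregated without correlation.
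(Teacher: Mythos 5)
You have set the proof up correctly, and everything before the step you yourself flag as the crux is sound: with $\pi'=\sum_{a}\pi(a)\pi_a$, each $\Theta=\lift\step(t,\word{\pi',\pi_2'})$ decomposes as $\sum_a\pi(a)\Theta_a$ with $\Theta_a=\lift\step(t,\word{\pi_a,\pi_2'})\in\lift\step(t,\pi_a)$, the hypothesis yields per-action witnesses $\Delta_a=\lift\step(s,\word{\pd{a},\pi_2^a})$, and summing the up-closure inequalities of Lemma~\ref{lem:lifting-upclosure} shows that $\sum_a\pi(a)\Delta_a$ is $\lift\porder$-below $\Theta$. (For reference, the paper states this lemma without any proof, so you are not missing a trick from the text.) The gap is exactly where you locate it: the combined witness is a \emph{correlated} response of player $\ptwo$, and nothing forces any element of the uncorrelated set $\lift\step(s,\pi)$ to lie below $\Theta$. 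No minimax or LP argument will close this, because the de-correlation step genuinely fails; your instinct to ``watch carefully'' here was the right one, and the answer is negative.

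Concretely: take $\A_\pone=\set{a_1,a_2}$, $\A_\ptwo=\set{b_1,b_2}$, absorbing states $u_0,u_1$ lying in distinct blocks $B_0,B_1$ of $\Sigma$ with $B_1\porder B_0$, so that for distributions supported on $\set{u_0,u_1}$ the relation $\Delta\,\lift\porder\,\Theta$ reduces to $\Delta(B_0)\leq\Theta(B_0)$. Let $\step(s,\word{a_i,b_j})=\pd{u_0}$ if $i=j$ and $\pd{u_1}$ otherwise, and let $\step(t,\word{a_i,b_j})=\tfrac{3}{5}\pd{u_0}+\tfrac{2}{5}\pd{u_1}$ if $i\neq j$ and $\pd{u_1}$ if $i=j$. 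For each deterministic $\pd{a_i}$ at $s$, player $\ptwo$ can respond so as to reach $\pd{u_1}$, whose $B_0$-mass is $0$; hence $\lift\step(s,\pd{a_i})\,\lift\porder_{Sm}\,\lift\step(t,\pi)$ holds for \emph{every} $\pi\in\Pi_\pone$, and the hypothesis of the lemma is satisfied. But for the uniform $\pi$, every element of $\lift\step(s,\pi)$ equals $\tfrac12\pd{u_0}+\tfrac12\pd{u_1}$, so the conclusion requires some $\pi'$ with $\lift\step(t,\word{\pi',\pi_2'})(B_0)\geq\tfrac12$ for all $\pi_2'$, whereas $\max_{\pi'}\min_{j}\lift\step(t,\word{\pi',\pd{b_j}})(B_0)=\tfrac{3}{5}\cdot\tfrac12=\tfrac{3}{10}<\tfrac12$. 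Mixing strictly increases what player $\pone$ can guarantee at $s$ (the usual matrix-game phenomenon), while player $\ptwo$'s matching response at $s$ may no longer depend on the resolved action --- precisely the correlation you identified. So your proof cannot be completed as stated; any repair would have to either strengthen the hypothesis (e.g.\ quantify over all mixed actions of player $\pone$ at $s$, or restrict to games where $\step(s,\cdot)$ is player-$\pone$-convex) or weaken the simulation notion being targeted.
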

\begin{proof}(sketch) By definition, $t$ simulates $s$ on $(\Sigma,\porder)$ if for all $\pi_1\in\Pi_\pone$
there exists $\pi_2\in\Pi_\pone$ such that $\lift\step(s,\pi_1)\lift\porder_{Sm}\lift\step(t,\pi_2)$.
Since $\pi_1(s)\in\dist(\A_\pone)$, for each $a_1\in\Supp{\pi_1(s)}$, we have some $\pi_3\in\Pi_\pone$ such that
$\lift\step(s,a_1)\lift\porder_{Sm}\lift\step(t,\pi_3)$, and we get $\pi_2$ by combining all such mixed actions $\pi_3$,
by applying Lemma~\ref{lem:lift-comb} and Lemma~\ref{lem:split:mix-action}.
\end{proof}

The next lemma says when checking a Smyth order $\lift\step(s,\pi)\,\lift\porder_{Sm}\,\lift\step(t,\pi')$,
it suffices to focus on player \ptwo's deterministic choices in $\lift\step(t,\pi')$
as all probabilistic choices of player \ptwo\ can be represented as interpolations from deterministic choices.

\begin{lemma}\label{lem:simplify2}
$\lift\step(s,\pi)\,\lift\porder_{Sm}\,\lift\step(t,\pi')$ if
for all $a\in\A_\ptwo$, there exists $\pi''\in\Pi_\ptwo$ such that
$\lift\step(s,\word{\pi,\pi''})\,\lift\porder\,\lift\step(t,\word{\pi',\pd{a}})$.
\end{lemma}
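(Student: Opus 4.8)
The plan is to unfold the Smyth order and reduce the verification for an arbitrary mixed action of player \ptwo\ on the $t$-side to the finitely many deterministic ones supplied by the hypothesis, exploiting that the lifted transition function $\lift\step(t,\word{\pi',\cdot})$ is linear in its second argument. Spelling out $\lift\step(s,\pi)\,\lift\porder_{Sm}\,\lift\step(t,\pi')$ via the definition of the Smyth order, it suffices to show that for every $\sigma\in\Pi_\ptwo$ there is some $\tau\in\Pi_\ptwo$ with $\lift\step(s,\word{\pi,\tau})\,\lift\porder\,\lift\step(t,\word{\pi',\sigma})$, where as usual all distributions are read as mapped onto $\Sigma$.

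First I would record the interpolation identity. Since $\sigma=\sum_{a\in\A_\ptwo}\sigma(a)\,\pd a$, linearity of $\lift\step$ in the second mixed action gives $\lift\step(t,\word{\pi',\sigma})=\sum_{a\in\A_\ptwo}\sigma(a)\,\lift\step(t,\word{\pi',\pd a})$; that is, the distribution reached under $\sigma$ is the convex combination, with weights $\sigma(a)$, of the distributions reached under the deterministic choices $\pd a$. This is the precise sense in which probabilistic choices are interpolations of deterministic ones. Next I would invoke the hypothesis, which for each $a\in\A_\ptwo$ yields some $\pi''_a\in\Pi_\ptwo$ with $\lift\step(s,\word{\pi,\pi''_a})\,\lift\porder\,\lift\step(t,\word{\pi',\pd a})$. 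I then set $\tau=\sum_{a\in\A_\ptwo}\sigma(a)\,\pi''_a$, which is again a member of $\Pi_\ptwo$ as a convex combination of distributions over $\A_\ptwo$ (the weights $\sigma(a)$ sum to $1$). By the same linearity, $\lift\step(s,\word{\pi,\tau})=\sum_{a\in\A_\ptwo}\sigma(a)\,\lift\step(s,\word{\pi,\pi''_a})$, so both sides of the desired lifting are convex combinations, with the common weights $\sigma(a)$, of distributions already related by $\lift\porder$.

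It remains to show that $\lift\porder$ is closed under convex combinations: if $\Delta_a\,\lift\porder\,\Theta_a$ for all $a$, then $\sum_a\sigma(a)\Delta_a\,\lift\porder\,\sum_a\sigma(a)\Theta_a$. I expect this to be the one genuine step rather than pure bookkeeping, though it is short via Lemma~\ref{lem:lifting-upclosure}: for every $s$ we have $(\sum_a\sigma(a)\Delta_a)(\upclose s)=\sum_a\sigma(a)\Delta_a(\upclose s)\leq\sum_a\sigma(a)\Theta_a(\upclose s)=(\sum_a\sigma(a)\Theta_a)(\upclose s)$, and the up-closure characterisation of that lemma then delivers the lifting. (Alternatively one may combine the witnessing weight functions $w_a$ into $w=\sum_a\sigma(a)w_a$ and verify its three defining conditions directly.) Applying this with $\Delta_a=\lift\step(s,\word{\pi,\pi''_a})$ and $\Theta_a=\lift\step(t,\word{\pi',\pd a})$ gives $\lift\step(s,\word{\pi,\tau})\,\lift\porder\,\lift\step(t,\word{\pi',\sigma})$, which is exactly the witness required by the Smyth order for the chosen $\sigma$; as $\sigma$ was arbitrary, the Smyth inequality $\lift\step(s,\pi)\,\lift\porder_{Sm}\,\lift\step(t,\pi')$ follows.
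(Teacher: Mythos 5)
Your proof is correct and follows precisely the interpolation argument that the paper indicates in the sentence preceding the lemma but does not spell out: decompose $\sigma$ into deterministic actions, take the matching convex combination $\tau=\sum_a\sigma(a)\pi''_a$ of the witnesses, and use linearity of $\lift\step$ together with closure of $\lift\porder$ under convex combinations (which your appeal to Lemma~\ref{lem:lifting-upclosure}, or the summed weight function $\sum_a\sigma(a)w_a$, establishes correctly). Since the paper omits the proof entirely, your write-up is a faithful and complete filling-in of the intended argument.
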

\begin{proof}(sketch) Similar to the proof of the above lemma by combining all the mixed actions in $\Pi_\ptwo$.
\end{proof}
Combining the above two lemmas, we have the following.
\begin{lemma}\label{lem:simplify3}
Let $(\Sigma,\porder)$ be a partition pair, then $t$ simulates $s$ with respect to player-\pone's choice
on $(\Sigma,\porder)$ if for all $a_1\in\A_\pone$, there exists $\pi_1\in\Pi_\pone$
such that for all $a_2\in\A_\ptwo$, there exists $\pi_2\in\Pi_\ptwo$ such that
$\lift\step(s,\word{\pd{a_1},\pi_2})\mathrel{\lift\porder}\lift\step(t,\word{\pi_1,\pd{a_2}})$.
\end{lemma}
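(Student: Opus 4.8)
The plan is to prove Lemma~\ref{lem:simplify3} by directly chaining Lemma~\ref{lem:simplify1} and Lemma~\ref{lem:simplify2}, since the hypothesis of Lemma~\ref{lem:simplify3} is assembled precisely out of the premises of those two lemmas. The guiding observation is that the inner quantifier block ``for all $a_2\in\A_\ptwo$ there exists $\pi_2\in\Pi_\ptwo$ such that $\lift\step(s,\word{\pd{a_1},\pi_2})\,\lift\porder\,\lift\step(t,\word{\pi_1,\pd{a_2}})$'' is, read literally, the premise of Lemma~\ref{lem:simplify2} instantiated at $\pi:=\pd{a_1}$ and $\pi':=\pi_1$. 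So no new construction is needed; the argument is a composition.

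First I would fix an arbitrary $a_1\in\A_\pone$ and let $\pi_1\in\Pi_\pone$ be the mixed action handed to us by the hypothesis of Lemma~\ref{lem:simplify3} for this $a_1$. The remaining portion of that hypothesis is then exactly the premise of Lemma~\ref{lem:simplify2} with $\pi=\pd{a_1}$ and $\pi'=\pi_1$ (matching $a$ with $a_2$ and $\pi''$ with $\pi_2$). Applying Lemma~\ref{lem:simplify2} therefore yields the Smyth order $\lift\step(s,\pd{a_1})\,\lift\porder_{Sm}\,\lift\step(t,\pi_1)$.

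Since $a_1\in\A_\pone$ was arbitrary, this establishes that for every $a_1\in\A_\pone$ there exists $\pi_1\in\Pi_\pone$ with $\lift\step(s,\pd{a_1})\,\lift\porder_{Sm}\,\lift\step(t,\pi_1)$. But this is precisely the premise of Lemma~\ref{lem:simplify1}, reading $a$ as $a_1$ and $\pi$ as $\pi_1$. Invoking Lemma~\ref{lem:simplify1} I conclude that $t$ simulates $s$ with respect to player-\pone's choice on $(\Sigma,\porder)$, i.e. $s\Sim_{(\Sigma,\porder)}^\star t$, which is the claim.

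I do not expect a genuine obstacle here: all of the real work has already been carried out inside the two preceding lemmas, which reduce player~\pone's mixed actions to deterministic ones on the simulated side and player~\ptwo's mixed responses to deterministic ones on the simulating side. The only point requiring care is that the nested quantifier alternation line up without gaps -- in particular that the single existential witness $\pi_1$ chosen for player~\pone\ against the deterministic action $\pd{a_1}$ is the very same mixed action that is passed as $\pi'$ into Lemma~\ref{lem:simplify2} and then as $\pi$ into Lemma~\ref{lem:simplify1}. Once this bookkeeping is respected, the two ``if'' statements compose verbatim and the lemma follows.
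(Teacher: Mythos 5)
Your proof is correct and is exactly the argument the paper intends: the paper introduces the lemma with ``Combining the above two lemmas'' and gives no further proof, so your explicit composition of Lemma~\ref{lem:simplify2} (instantiated at $\pi=\pd{a_1}$, $\pi'=\pi_1$) followed by Lemma~\ref{lem:simplify1} is the same route, just spelled out. The quantifier bookkeeping you flag is handled correctly.
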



The following lemma states how to check if the action $a$ can be followed by a mixed action from $\Pi_\pone$.
Given a finite set $S$ and $\porder$ a partial order on $S$, we define $\upclose{s}=\set{t\in S\mid s\porder t}$, called the \emph{up-closure} of $s$.
Finding a weight function for two distributions on a partition pair can be encoded in LP,  with linearity of the constraints guaranteed by Lemma~\ref{lem:simplify3}.

\begin{lemma}\label{lem:sim-lp}
Given a partition pair $(\Sigma,\porder)$, two states $s,t\in S$ and $a\in\A_\pone$,
there exists $\pi\in\Pi_\pone$ such that $\lift\step(s,\pd{a})\,\lift\porder_{Sm}\,\lift\step(t,\pi)$,
iff the following LP has a solution:\\
Let $\A_\pone=\set{a_1,a_2,\dots, a_\ell}$, $\A_\ptwo=\set{b_1,b_2,\dots, b_m}$ and $\Sigma=\set{B_1,B_2,\dots, B_n}$
\begin{equation}\label{eq:sum-alpha}\sum_{i=1}^\ell\alpha_i=1\end{equation}
\begin{equation}\label{eq:alpha}\forall i=1,2,\ldots, \ell: 0\leq\alpha_i\leq 1\end{equation}
\begin{equation}\label{eq:sum-beta}\forall j=1,2,\ldots, m:\sum_{k=1}^m\beta_{j,k}=1\end{equation}
\begin{equation}\label{eq:beta}\forall j,k=1,2,\ldots, m: 0\leq\beta_{j,k}\leq 1\end{equation}

\bigskip

\noindent Moreover, $\forall x,y=1,2,\ldots, n: j=1,2,\ldots, m:$
\begin{equation}\label{eq:weight-sum} 0\leq w_{x,y,j}\leq 1\end{equation}
\begin{equation}\label{eq:weight-left}\forall B_z\in\Sigma:\sum_{k=1}^m\beta_{j,k}\cdot\step(s,\word{a,b_k})(B_z) = \sum_{z'=1}^{n}w_{z,z',j} = \sum_{B_{z'}\in\upclose{B_z}}w_{z,z',j}\end{equation}
\begin{equation}\label{eq:weight-right}\forall B_z\in\Sigma:\sum_{i=1}^\ell\alpha_i\cdot\step(t,\word{a_{i},b_j})(B_z) = \sum_{z'=1}^{n}w_{z',z,j}\end{equation}
\end{lemma}
Informally, $\alpha_1,\alpha_2,\dots,\alpha_\ell$ are used to `guess' a mixed action from player \pone,
as constrained in Eq.~\ref{eq:sum-alpha} and Eq.~\ref{eq:alpha}.
To establish the Smyth order $\lift\porder_{Sm}$, 
for every player \ptwo\ action $b_j$ with $j=1,2,\dots, m$, we `guess' a mixed action from $\A_\ptwo$ represented
by $\beta_{j,1},\beta_{j,2}\ldots,\beta_{j,m}$, as constrained in Eq.~\ref{eq:sum-beta} and Eq.~\ref{eq:beta}.
Then for every player \ptwo\ action $b_j$, we use $w_{x,y,j}$ to represent the weight function that is required to establish
the lifted relation $\lift{\porder}$ for distributions $\sum_{k=1}^m\beta_{j,k}\cdot\delta(s,a,b_k)$
and $\sum_{i=1}^\ell\alpha_i\cdot\delta(t,a_{i},b_j)$, by Eq.~\ref{eq:weight-sum}, Eq.~\ref{eq:weight-left} and Eq.~\ref{eq:weight-right}.
In particular, the additional condition in Eq.~\ref{eq:weight-left} is to ensure that every non-zero $w_{z,z',j}$ must imply $B_z\porder B_{z'}$ as required by the weight function.
\begin{proof} (of Lemma~\ref{lem:sim-lp})
\begin{itemize}
\item[($\Leftarrow$)] Suppose the above LP has a solution, by Lemma~\ref{lem:simplify3},
we show that there exists a player $\pone$ mixed action $\pi\in\Pi_\pone$, such that for all $b_j\in\A_\ptwo$,
there exists a player $\ptwo$ mixed action $\sigma_j\in\Pi_\ptwo$ such that $\lift\step(s,\word{\pd{a},\sigma_j})\mathrel{\lift\porder}\lift\step(t,\word{\pi,\pd{b_j}})$.

From the solution of LP, a player $\pone$ mixed action $\pi$ can be defined by $\pi(a_i)=\alpha_i$ for all $a_i\in\A_\pone$,
satisfying $\sum_{i=1}^\ell\alpha_i=1$, by Eq.~\ref{eq:sum-alpha} and Eq.~\ref{eq:alpha}.
For each player $\ptwo$ action $b_j\in\A_\ptwo$, the mixed action $\sigma_i$ can be defined as $\sigma_i(b_k)=\beta_{j,k}$,
satisfying $\sum_{k=1}^m\beta_{j,k}=1$, by Eq.~\ref{eq:sum-beta} and Eq.~\ref{eq:beta}.
Next we show for all $1\leq j\leq m$, $\lift\step(s,\word{\pd{a},\sigma_j})\mathrel{\lift\porder}\lift\step(t,\word{\pi,\pd{b_j}})$,
which is equivalent to $\sum_{k=1}^m\sigma_j(b_k)\cdot\step(s,\word{a,b_k})\mathrel{\lift\porder}\sum_{i=1}^\ell\pi(a_i)\cdot\step(t,\word{a_i, b_j})$
by Lemma~\ref{lem:split:mix-action} and Lemma~\ref{lem:split:mix-action-2}.
Given the partition $\Sigma=\set{B_1,B_2,\dots, B_n}$,
a weight function $w:\Sigma\times\Sigma\rightarrow[0,1]$ can be defined by $w(B_x, B_y)=w_{x,y,j}$ for all $1\leq x,y\leq n$.
The conditions on weighted sums are given in Eq.~\ref{eq:weight-left} and Eq.~\ref{eq:weight-right}.
We show that $w(B_x,B_y)=w_{x,y,j}>0$ implies $B_x\porder B_y$.
Suppose $w(B_x, B_y)>0$ and $B_x\not\porder B_y$,
then $B_y\not\in\upclose{B_x}$, which would imply $\sum_{x'=1}^{n}w_{x,x',j} > \sum_{B_{x'}\in\upclose{B_x}}w_{x,x',j}$,
contradicting Eq.~\ref{eq:weight-left}.

\item[($\Rightarrow$)]  Suppose there exists a player $\pone$ mixed action $\pi\in\Pi_\pone$,
such that for all $b_j\in\A_\ptwo$, there exists a player $\ptwo$ mixed action $\sigma_j\in\Pi_\ptwo$
such that $\lift\step(s,\word{\pd{a},\sigma_j})\mathrel{\lift\porder}\lift\step(t,\word{\pi,\pd{b_j}})$.
By Lemma~\ref{lem:split:mix-action} and Lemma~\ref{lem:split:mix-action-2},
equivalently, we have $\sum_{k=1}^m\sigma_j(b_k)\cdot\step(s,\word{a,b_k})\mathrel{\lift\porder}\sum_{i=1}^\ell\pi(a_i)\cdot\step(t,\word{a_i, b_j})$.
We show the above LP constraints have a solution.

First we let $\alpha_i=\pi(a_i)$ for each $a_i\in\A_\pone$, which satisfies Eq.~\ref{eq:sum-alpha} and Eq.~\ref{eq:alpha}.
Similarly, for each $1\leq j\leq m$, let $\beta_{j,k}=\sigma_j(b_k)$ for all $1\leq k\leq m$, which satisfies Eq.~\ref{eq:sum-beta} and Eq.~\ref{eq:beta}.
For each $1\leq j\leq m$, 
given the partition $\Sigma=\set{B_1,B_2,\dots, B_n}$, the existing weight function $w:\Sigma\times\Sigma\rightarrow[0,1]$ satisfies
\begin{itemize}
 \item[(a)] for all $B_x\in\Sigma$, $\sum_{y=1}^{n}w(B_x,B_y)=\sum_{k=1}^m\sigma_j(b_k)\cdot\step(s,\word{a,b_k})(B_x)$,
 \item[(b)] for all $B_y\in\Sigma$, $\sum_{x=1}^{n}w(B_x,B_y)=\sum_{i=1}^\ell\pi(a_i)\cdot\step(t,\word{a_i, b_j})(B_y)$,
 \item[(c)] for all $B_x, B_y\in\Sigma$, $w(B_x,B_y)>0$ implies $B_x\porder B_y$.
\end{itemize}
For each $1\leq j\leq m$ we let $w_{x,y,j} = w(B_x, B_y)$.
It is then clear that after replacing each $\sigma_j(b_k)$ by $\beta_{j,k}$ in (a), we get the left equality of Eq.~\ref{eq:weight-left}.
Similarly, after replacing each $\pi(a_i)$ by $\alpha_i$ in (b) we get Eq.~\ref{eq:weight-right}.
Next we show that $\sum_{y=1}^{n}w_{x,y,j} = \sum_{B_{y}\in\upclose{B_x}}w_{x,y,j}$.
First we have $\sum_{B_{y}\in\upclose{B_x}}w_{x,y,j}\leq \sum_{y=1}^{n}w_{x,y,j}$ by $\upclose{B_x}\subseteq\Sigma$.
If $\sum_{B_{y}\in\upclose{B_x}}w_{x,y,j}<\sum_{y=1}^{n}w_{x,y,j}$,
there would be some $B_y\not\in\upclose{B_x}$ and $w_{x,y,j}>0$, which implies $w(B_x,B_y)>0$ and $B_x\not\porder B_y$, contradicting (c).
\end{itemize}
\end{proof}

We define a predicate $\CanFollow$ such that $\CanFollow((\Sigma,\porder),s,t,a)$ decides
whether there exists a mixed action of player \pone\ from $t$ which simulates action $a\in\A_\pone$
from $s$ on the partition pair $(\Sigma,\porder)$. $\CanFollow$ establishes 
an LP problem from its parameters (see Lemma~\ref{lem:sim-lp}).
We further define a predicate $\CanSim$ which decides whether a state simulates another
with respect to player~\pone's choice on $(\Sigma,\porder)$ for all actions in $\A_\pone$,
i.e., ${\CanSim((\Sigma,\porder),s,t)}$ returns \emph{true} if $\CanFollow((\Sigma,\porder),s,t,a)$
returns \emph{true} for~all~${a\in\A_\pone}$.

\begin{algorithm}[!ht]
\caption{\label{alg:split} Refining a block to make it stable on a partition pair}
\begin{algorithmic}
	\STATE INPUT: a partition pair $(\Sigma,\porder)$, a block $B\in\Sigma$
	\STATE OUTPUT: a partition pair $(\Sigma_B,\porder_B)$ on $B$
	\STATE {\bf function} {\sf Split} ($(\Sigma,\porder)$, $B$)
	\STATE
	\STATE {\bf begin}
	\STATE \hspace{10pt} $\Sigma_B:=\set{\set{s}\mid s\in B}$; $\porder_B:=\set{(s,s)\mid s\in B}$; $\Sigma':=\emptyset$; $\porder':=\emptyset$
	\STATE \hspace{10pt} \while\ $\Sigma_B\neq\Sigma'\vee\porder_B\neq\porder'$ \doo
	\STATE \hspace{20pt} $\Sigma':=\Sigma_B$; $\porder':=\porder_B$
	\STATE \hspace{20pt} \forEach\ \textit{distinct} $B_1,B_2\in\Sigma_B$ \doo
         \STATE \hspace{30pt} \textit{pick any} $s_1\in B_1$ \textit{and} $s_2\in B_2$
	\STATE \hspace{30pt} {\bf if} ($\CanSim((\Sigma,\porder),s_1,s_2)\wedge\CanSim((\Sigma,\porder),s_2,s_1)$) {\bf then}
	\STATE \hspace{40pt} $\Sigma_B:=\Sigma_B\setminus\set{B_1,B_2}\cup\set{B_1\cup B_2}$
	\STATE \hspace{40pt} $\porder_B:=\porder_B\cup\ \set{(X,B_1\cup B_2)\mid X\in\Sigma: (X,B_1)\in\porder_B\vee\ (X,B_2)\in\porder_B}$
	\STATE \hspace{60pt} $\cup\set{(B_1\cup B_2,X)\mid X\in\Sigma: (B_1,X)\in\porder_B\vee\ (B_2,X)\in\porder_B}$
	\STATE \hspace{60pt} $\setminus\set{(B_i,X),(X,B_i)\mid X\in\Sigma: (B_i,X),(X,B_i)\in\porder_B\wedge\ i\in\set{1,2}}$
	\STATE \hspace{30pt} {\bf else if} ($\CanSim((\Sigma,\porder),s_1,s_2)$) {\bf then}
	\STATE \hspace{40pt} $\porder_B:=\porder_B\cup\,\set{(B_2,B_1)}$
	\STATE \hspace{30pt} {\bf else if} ($\CanSim((\Sigma,\porder),s_2,s_1)$) {\bf then}
	\STATE \hspace{40pt} $\porder_B:=\porder_B\cup\,\set{(B_1,B_2)}$
	\STATE \hspace{20pt} {\bf endfor}
	\STATE \hspace{10pt} {\bf endwhile}
	\STATE \hspace{10pt} {\bf return} $(\Sigma_B,\porder_B)$
	\STATE {\bf end}
\end{algorithmic}
\end{algorithm}

Algorithm~\ref{alg:split} defines a function {\sf Split} which refines a block $B\in\Sigma$ into a partition
pair corresponding to the maximal simulation that is stable on $(\Sigma,\porder)$. 
It starts with the finest partition and the identity relation (as the final relation is reflexive).
For each pair of blocks in the partition, we check if they can simulate each other by picking up
a state from each block.
(The choice of a state is arbitrary, because all states within the same block are simulation equivalent on $(\Sigma,\porder)$.)
If the two state are simulation equivalent on $(\Sigma,\porder)$ then we merge the two
blocks as well as all incoming and outgoing relation in the current partial order. If only one simulates
the other we add an appropriate pair into the current ordering. 
This process continues until~the~partition~pair~stabilises, when no more merging of partitions can happen
or any more pair can be added to $\porder_B$, which means the resulting partition pair $(\Sigma_B,\porder_B)$ is maximal.

Algorithm~\ref{alg:gcpp} is based on the functionality of \textsf{Split} in Algorithm~\ref{alg:split}.
Starting from the partition $(\Sigma_0,\Id)$, which is identified as
$(\set{\set{t\mid\L(t)=\L(s)}\mid s\in S}, \set{(B,B)\mid B\in\Sigma_0})$,
the algorithm computes a sequence of partition pairs $(\Sigma_1,\porder_1), (\Sigma_2,\porder_2)\ldots$
until it stabilises, which is detected by checking the condition $\Sigma\neq\Sigma'\,\vee\,\porder\neq\porder'$.
At each iteration we have $(\Sigma_{i+1},\porder_{i+1})\leq (\Sigma_i,\porder_i)$. Moreover,
$(\Sigma_{i+1},\porder_{i+1})$ is the maximal partition pair that is stable on $(\Sigma_i,\porder_i)$,
because by Algorithm~\ref{alg:split}, the splitting of each block $B$ in $\Sigma_i$ creates 
a maximal partition pair $(\Sigma_B,\porder_B)$ of $B$
that is stable on $(\Sigma_i,\porder_i)$, and the new partition pair $(\Sigma_{i+1},\porder_{i+1})$ is
formed by merging all such maximal pairs as well as by taking into account the previous relation represented
by $(\Sigma_i,\porder_i)$. Intuitively, we have $(\Sigma_{i+1},\porder_{i+1})=\rho((\Sigma_i,\porder_i))$,
where $\rho$ is the operator as per Definition~\ref{def:rho}. %
The correctness of the algorithm is then justified by Theorem~\ref{thm:gcpp-max}, which states that it converges to the
coarsest partition pair that is contained in $(\Sigma_0,\Id)$ and
returns a representation of
the largest PA-\pone-simulation.

\begin{algorithm}[!ht]
\caption{\label{alg:gcpp} Computing the Generalised Coarsest Partition Pair}
\begin{algorithmic}
	\STATE INPUT: a probabilistic game structure $\G=\word{S, s_0, \L, \A,\step}$
	\STATE OUTPUT: a partition pair $(\Sigma,\porder)$ on $S$
	\STATE {\bf function} {\sf GCPP} ($\G$)
	\STATE
	\STATE {\bf begin}
	\STATE \hspace{10pt} $\Sigma:=\set{\set{t\mid\L(t)=\L(s)}\mid s\in S}$; $\porder\,:=\set{(B,B)\mid B\in\Sigma}$
    \STATE \hspace{10pt} $\Sigma':=\emptyset$; $\porder':=\emptyset$
	\STATE \hspace{10pt} \while\ $\Sigma\neq\Sigma'\vee\porder\neq\porder'$ \doo
	\STATE \hspace{20pt} $\Sigma':=\Sigma$; $\porder':=\porder$
	\STATE \hspace{20pt} \forEach\ $B\in\Sigma$ \doo
    \STATE \hspace{30pt} $(\Sigma_B,\porder_B):=\textsf{Split}((\Sigma',\porder'), B)$
    \STATE \hspace{30pt} $\Sigma:=\Sigma\setminus\set{B}\cup\Sigma_B$
    \STATE \hspace{30pt} $\porder\ :=\ \porder\cup\porder_B$
    \STATE \hspace{50pt} $\cup\,\set{(B',X)\mid X\in\Sigma: B'\in\Sigma_B: (B,X)\in\,\porder}$
    \STATE \hspace{50pt} $\cup\,\set{(X,B')\mid X\in\Sigma: B'\in\Sigma_B: (X,B)\in\,\porder}$
    \STATE \hspace{50pt} $\setminus\set{(B,X),(X,B)\mid X\in\Sigma: (X,B),(B,X)\in\,\porder}$
	\STATE \hspace{20pt} {\bf endfor}
	\STATE \hspace{10pt} {\bf endwhile}
	\STATE \hspace{10pt} {\bf return} $(\Sigma,\porder)$
	\STATE {\bf end}
\end{algorithmic}
\end{algorithm}

\paragraph{Space complexity.}
For a PGS $\word{S,s_0,\L,\A,\step}$, it requires $\bigO(|S|)$ to store
the state space and $\bigO(|S|^2\cdot|\A|)$ for the transition relation, since for each $s\in S$
and $\word{a_1,a_2}\in\A$ it requires an array of size $\bigO(|S|)$ to store a distribution.
Recording a partition pair takes $\bigO(|S|^2 +|S|^2)$ as
the first part is needed to record for each state which equivalence class in the
partition it belongs, 
and the second part is needed for the partial order relation $\porder$.
The computation from $(\Sigma_i,\porder_i)$ to $(\Sigma_{i+1},\porder_{i+1})$ can be done in-place
which only requires additional constant space to track if the partition pair has been modified during each iteration.
Another extra space-consuming part is for solving LP constrains, which we assume has space
usage $\bigO(\gamma(N))$ where $N=1+|\A_\pone|+|\A_\ptwo|+|\A_\ptwo|^2+  |S|^2\cdot|\A_\ptwo|  + 3\cdot |S|\cdot|\A_\ptwo|$
is the number of linear constraints at most, and $\gamma(N)$ some polynomial. The space complexity roughly
sums up to $\bigO(|S|^2\cdot|\A|+\gamma(|\A|^2+|S|^2\cdot|\A|))$.
The first part $\bigO(|S|^2\cdot|\A|)$ for the PGS itself can be considered optimal,
while the second part depends on the efficiency of the LP algorithm being used.

\paragraph{Time complexity.}
The number of variables in the LP problem in Lemma~\ref{lem:sim-lp} is $|\A_\pone| + |\A_\ptwo|^2 + |S|^2\cdot|\A_\ptwo|$,
and the number of constraints is bounded by $1+|\A_\pone|+|\A_\ptwo|+|\A_\ptwo|^2+ |S|^2\cdot|\A_\ptwo|  + 3\cdot |S|\cdot|\A_\ptwo|$. 
The predicate $\CanSim$ costs $|\A_\pone|$ times LP solving.
Each {\sf Split} invokes at most $|B|^2$ testing of $\CanSim$ where $B$ is a block in $\Sigma$.
Each iteration of {\sf GCPP} splits all current blocks,
and the total number of comparisons within each iteration of {\sf GCPP} is be bounded
by $|S|^2$.
(However it seems heuristics on the existing partition can achieve a speed close
to linear in practice by caching previous \CanSim\ checks~\cite{ZHEJ08}.)
The number of iterations is bounded by $|S|$. This gives us time complexity
which is in the worst case to solve
$\bigO(|S|^3\cdot|\A_\pone|)$ many such LP problems, each of which has
$\bigO(|\A|^2+ |S|^2\cdot|\A|)$ constraints.

\paragraph{Remark.}
By removing the interaction between players (i.e., the alternating
part), our algorithm downgrades to a partition-based algorithm computing the largest \emph{strong}
probabilistic simulation relation in probabilistic automata, where \emph{combined transitions} are needed.
This simplified setting is equivalent to removing choices from player $\ptwo$ from PGS. (Informally, we let $|\A_\ptwo| = 1$.)
Now the time complexity is to solve $\bigO(|S|^3\cdot|\A|)$ many such LP problems,
each of which has $\bigO(|\A|+ |S|^2)$ constraints.
The algorithm of~\cite{ZHEJ08} for computing strong probabilistic simulation
has time complexity of solving $\bigO(|S|^2\cdot m)$ LP problems, where $m$ is the size of the
transition relation comparable to $\bigO(|S|^2\cdot|\A|)$.
They have $\bigO(|S|^2)$ constraints for each LP instance.
Note that the space-efficient algorithm~\cite{Zhang08} for probabilistic simulation ({\em without} combined transitions)
has the same space complexity but better time complexity than ours,
which is due to the reduction to the maximal~flow~problem.

\section{Conclusion}
\label{sec:conclusion}

We have presented a partition-based algorithm to compute the largest probabilistic alternating
simulation relation in finite probabilistic game structures. To the best of our knowledge, our work presents
the first polynomial-time algorithm for computing a relation in probabilistic systems considering 
(concurrently) mixed choices from players. 
As aforementioned, PA-simulation is known as stronger
than the simulation relation characterising quantitative $\mu$-calculus~\cite{dAMRS08},
though it is still a conservative approximation which has a reasonable complexity to be
useful in verification of game-based properties. 

\subsection*{Acknowledgement}
Special thank goes to Timothy Bourke who carefully read through the draft, with his effort greatly 
improving the presentation of the paper.
The authors also thank Wan Fokkink, Rob van Glabbeek and Lijun Zhang for their helpful comments on
the technical part. 


\end{document}